\theoremstyle{plain}
\newtheorem{thm}{\protect\theoremname}
\theoremstyle{remark}
\newtheorem{rem}{\protect\remarkname}
\theoremstyle{definition}
\newtheorem{defn}{\protect\definitionname}
\author{
Hao~Sun,~Xianghao~Yu,~\IEEEmembership{Senior Member,~IEEE}, and Junting~Chen,~\IEEEmembership{Member,~IEEE}
\thanks{Hao Sun and Xianghao Yu are with the Department of Electrical Engineering, City University of Hong Kong, Hong Kong (e-mail:
hao.sun@cityu.edu.hk; alex.yu@cityu.edu.hk).}\thanks{Junting Chen is with the School of Science and Engineering (SSE), Shenzhen Future Network of Intelligence Institute (FNii-Shenzhen), and
Guangdong Provincial Key Laboratory of Future Networks of Intelligence, The Chinese University of Hong Kong, Shenzhen, Guangdong 518172, China
(e-mail: juntingc@cuhk.edu.cn).}}
\newcommand{\newac}{\newacronym}
\newcommand{\ac}{\gls}
\newcommand{\Ac}{\Gls}
\providecommand{\definitionname}{Definition}
\providecommand{\remarkname}{Remark}
\providecommand{\theoremname}{Theorem}
\providecommand{\definitionname}{Definition}
\providecommand{\remarkname}{Remark}
\providecommand{\theoremname}{Theorem}
\begin{document}
\title{Structure-Aware Near-Field Radio Map Recovery via RBF-Assisted Matrix
Completion}
\maketitle
\begin{abstract}
This paper proposes a novel structure-aware matrix completion framework
assisted by radial basis function (RBF) interpolation for near-field
radio map construction in extremely large multiple-input multiple-output
(XL-MIMO) systems. Unlike the far-field scenario, near-field wavefronts
exhibit strong dependencies on both angle and distance due to spherical
wave propagation, leading to complicated variations in received signal
strength (RSS). To effectively capture the intricate spatial variations
structure inherent in near-field environments, a regularized RBF interpolation
method is developed to enhance radio map reconstruction accuracy.
Leveraging theoretical insights from interpolation error analysis
of RBF, an inverse $\mu$-law-inspired non-uniform sampling strategy
is introduced to allocate measurements adaptively, emphasizing regions
with rapid RSS variations near the transmitter. To further exploit
the global low-rank structure in the near-field radio map, we integrate
RBF interpolation with nuclear norm minimization (NNM)-based matrix
completion. A robust Huberized leave-one-out cross-validation (LOOCV)
scheme is then proposed for adaptive selection of the tolerance parameter,
facilitating optimal fusion between RBF interpolation and matrix completion.
The integration of local variation structure modeling via RBF interpolation
and global low-rank structure exploitation via matrix completion yields
a structure-aware framework that substantially improves the accuracy
of near-field radio map reconstruction. Extensive simulations demonstrate
that the proposed approach achieves over 10\% improvement in normalized
mean squared error (NMSE) compared to standard interpolation and matrix
completion methods under varying sampling densities and shadowing
conditions.
\end{abstract}

\begin{IEEEkeywords}
Huberized leave-one-out cross-validation, matrix completion, near-field
radio map, non-uniform sampling, radial basis function, structure-aware.
\end{IEEEkeywords}

\section{Introduction}

\Ac{6g} wireless communication is envisioned to provide ultra-high
data rates, massive device connectivity, and high precision localization
and sensing \cite{CheMit:J19,ZenCheXuWu:J24,YuXZhaHaeLet:J17}. A
key enabler of these capabilities is the deployment of \ac{xl-mimo}
systems, which increases the number of antennas by an order of magnitude
and expands the array aperture to the meter scale. This substantial
aperture expansion fundamentally alters the electromagnetic propagation
regime by enlarging the radiative near-field region, where wavefront
curvature becomes non-negligible. The boundary of this region, characterized
by the Rayleigh distance, scales quadratically with aperture size,
bringing near-field effects into practical communication ranges. As
a result, XL-MIMO systems depart from the far-field assumptions of
conventional MIMO and instead operate in a regime dominated by spherical
wavefronts.

Within the expanded near-field regime, the wireless channel exhibits
enhanced spatial resolution and finer angular discrimination due to
its spherical wave propagation. These unique physical characteristics
provide the theoretical foundation for a range of frontier applications
critical to \ac{6g} systems, such as high-precision user localization
and tracking \cite{LeiZhaWanAiB:J25}, and efficient adaptive beam
management \cite{CuiWuZluYWei:J23}. To effectively exploit these
advantages in practice, it is highly beneficial to construct an accurate
near-field radio map, which provides a powerful tool for modeling
and leveraging the complexity of the near-field environment.

Specifically, the near-field radio map enables these advanced applications
by characterizing the intricate spatial distribution of \ac{rss}
across both angle and distance. For user localization and tracking,
the near-field radio map allows observed \ac{rss} patterns to be
aligned with the map's known distributions, achieving sub-meter positioning
precision \cite{LeiZhaWanAiB:J25}. Moreover, its detailed angle-distance
profiles directly support adaptive beam management strategies, providing
the critical data needed to optimize beam steering and reduce the
overhead of channel estimation and initial access in highly directional
\ac{xl-mimo} systems \cite{CuiWuZluYWei:J23}.

Recent advances in radio map construction have spanned several categories,
including interpolation-based methods, compressive sensing methods,
tensor methods and learning-based methods. Interpolation-based methods
include Kriging \cite{SatKoyFuj:J17,SatSutIna:J21} and kernel-based
methods \cite{TegRomRam:J19,BazGia:J13,XuZhaZha:C21,HamBer:C17,XuHuaJia:J21}.
In \cite{SatSutIna:J21}, a space-frequency interpolation scheme was
developed by introducing a linear shadowing interpolation across frequencies
before applying spatial Kriging. A recent work \cite{XuHuaJia:J21}
learned a set of kernel density functions to model the radio map using
nonparametric estimation. Compressive sensing-based methods include
sparse Bayesian learning \cite{WanZhuLin:J24,WanZhuLin:J24b,GaoZhuLinMat:J25},
dictionary learning \cite{KimGia:C13}, and matrix completion \cite{MigMarDon:J11,SunChe:C21,XiaZhaHua:J23}.
The works in \cite{WanZhuLin:J24,WanZhuLin:J24b} leveraged sparse
Bayesian learning to reduce the number of required measurements while
maintaining high reconstruction accuracy. In \cite{GaoZhuLinMat:J25},
gradient descent was combined with greedy matching to enhance sampling
efficiency, followed by spatial-temporal semi-variogram modeling to
facilitate accurate reconstruction. Tensor methods \cite{ChoMicLovKro:J21,SchCavSta:C19,MalZhaXuy:C18,SheWanDin:J22,SunChe:J24,CheWanHua:J25,SunCheLuo:C24}
have also been explored for high-dimensional radio map construction.
Recent work \cite{SheWanDin:J22} proposed an orthogonal matching
pursuit based on a tensor structure to recover 3D spectrum map. The
work in \cite{CheWanHua:J25} leveraged \ac{cpd} to reconstruct
spatial-spectral-temporal map. In parallel, learning-based methods
have emerged as powerful tools for radio map construction \cite{XuLCheChePuw:C25,TimShrFux:J24,CheChe:J24,ShrFuHong:J22,LuoLiZPenChe:J25}.
In \cite{TimShrFux:J24}, a generative model combined with a tensor
structure was proposed to extract underlying structures in the map.
The authors in \cite{CheChe:J24} developed a virtual obstacle model
to characterize geometry-induced signal variation through neural networks.
In \cite{LuoLiZPenChe:J25}, a novel generative framework leveraging
conditional denoising diffusion probabilistic models was introduced
to synthesize high-quality radio map from limited data. Beyond these,
hybrid methods that combine interpolation with structure\nobreakdash-informed
modeling have gained increasing attention \cite{SunChe:J22,ZhaWan:J22,CheWanZha:J23,SunChe:J24b,DonPuWZhoFuX:C25,SunChe:C22}.
In \cite{SunChe:J22}, an uncertainty-aware matrix completion approach
was proposed through integrating \ac{lpr} to improve radio map
construction. The work in \cite{ZhaWan:J22} utilized $k$-nearest
neighbors Gaussian process regression to exploit spatial signal similarity.
In \cite{CheWanZha:J23}, interpolation was incorporated with block-term
tensor decomposition to reconstruct 3D map. A convex optimization
framework that integrates interpolation with matrix completion was
proposed in \cite{DonPuWZhoFuX:C25}, which embeds interpolation into
a low-rank matrix model and solves it using the alternating direction
method of multipliers (ADMM).

However, most of these methods are not well-suited for near-field
radio map construction due to the distinctive propagation characteristics
in the near-field region. Unlike its far-field counterpart, near-field
wavefronts exhibit strong dependence on both angle and distance, resulting
in rapid spatial variations of signal strength and distinct radial
symmetry around the antenna array. Nevertheless, since the \ac{rss}
is determined by a small number of physical parameters, primarily
range, azimuth, and the array layout, the resulting near-field radio
map exhibits an approximately low-rank structure. Unfortunately, conventional
reconstruction approaches typically fail to simultaneously exploit
both local spatial correlations and the inherent global low-rank structure
of near-field radio map, while learning-based methods require large
amounts of labeled data and suffer from limited interpretability.

Addressing these challenges is hindered by two main issues. First,
purely global low-rank models often neglect sharp, localized RSS variations,
particularly near the transmitter, which in turn makes conventional
uniform sampling highly inefficient by placing redundant samples in
smooth regions while undersampling areas of rapid change. Second,
existing hybrid methods that combine interpolation and matrix completion
lack a systematic approach to determining the optimal balance between
local interpolation accuracy and global completion consistency.

To address these limitations, we introduce a novel framework that
jointly exploits the local variation and low-rank structures for near-field
radio map construction. Our method uses regularized \ac{rbf} interpolation
as a powerful prior to assist the matrix completion process. This
RBF-assisted approach enables a theoretically-grounded, non-uniform
sampling strategy that concentrates measurements where they are most
needed. Furthermore, it incorporates a data-driven mechanism to adaptively
balance the influence of the RBF model against the low-rank assumption,
creating a robust and accurate reconstruction.

The main contributions are summarized as follows:
\begin{itemize}
\item We develop a unified approach that amalgamates regularized RBF interpolation,
which captures local near-field variations, with matrix completion
that recovers the global low-rank structure of the near-field radio
map. The regularized RBF model includes a constant term that improves
reconstruction accuracy, especially around field boundaries.
\item Motivated by theoretical RBF interpolation error analysis, which reveals
the significance of measurement density and spatial signal variation,
we propose an inverse $\mu$-law-based non-uniform sampling strategy
that places denser measurements in regions with sharp RSS variation
(e.g., near the transmitter), thereby improving spatial resolution
where it is most critical.
\item We propose an adaptive parameter estimation strategy using robust
Huberized LOOCV. By robustly estimating interpolation error from data-driven
cross-validation, this method adaptively selects the tolerance parameter
$\delta$, ensuring an optimal balance between fidelity to the local
RBF prior and enforcement of global low-rank structure.
\item Our experiments find that the proposed \ac{rbf}-driven matrix completion
framework consistently outperforms traditional interpolation and matrix
completion techniques. Regularized RBF interpolation achieves over
40\% \ac{nmse} reduction at low sampling densities, while inverse
$\mu$-law sampling improves accuracy by over 10\% compared to uniform
sampling. The adaptive tolerance parameter selection via Huberized
\ac{loocv} further enhances reconstruction stability and performance.
Overall, the method delivers more than 10\% \ac{nmse} gain under
various sampling and shadowing conditions.
\end{itemize}

The rest of the paper is organized as follows. Section \ref{sec:System-Model}
introduces the near-field channel model, describing the considered
XL-MIMO system and the associated near-field signal propagation characteristics.
Section \ref{sec:Structure-Aware-Recovery-Framewo} details our proposed
solution in a step-by-step manner. It first presents the regularized
RBF interpolation method for modeling local signal variations, followed
by the inverse $\mu$-law sampling strategy designed to optimize its
accuracy. It then integrates this RBF-based prior into a global matrix
completion model, introducing the Huberized LOOCV scheme as a robust
mechanism for fusing the two components. The computational complexity
of the proposed method is then analyzed. Section \ref{sec:Simulation-Result}
presents extensive simulation results, demonstrating the effectiveness
and robustness of the proposed methods. Finally, conclusions are provided
in Section \ref{sec:Conclusion}.

\emph{Notation:} Scalars and sets are denoted by italic letters (e.g.,
$x$ and $\mathcal{X}$). Bold italic lowercase letters (e.g., $\bm{x}$)
denote vectors, and bold italic uppercase letters (e.g., $\bm{X}$)
represent matrices. The entry at the $i$-th row and $j$-th column
of matrix $\bm{X}$ is denoted by $X_{ij}$. The symbol $\|\bm{X}\|_{*}$
refers to the nuclear norm, representing the sum of the singular values
of matrix $\bm{X}$. The transpose and conjugate transpose are denoted
by $(\cdot)^{\text{T}}$ and $(\cdot)^{\text{H}}$, respectively.
$\mathbb{R}$ and $\mathbb{C}$ denote the sets of real and complex
numbers, respectively. $\mathcal{N}(\mu,\sigma^{2})$ and $\mathcal{CN}(\mu,\sigma^{2})$
denote the real- and complex-valued Gaussian distributions with mean
$\mu$ and variance $\sigma^{2}$, respectively. $\jmath$ denotes
the imaginary unit with $\jmath^{2}=-1$. $O(x)$ means $|O(x)|/x\leq C$,
for all $x>x_{0}$ with $C$ and $x_{0}$ are positive real numbers.
The symbol $|\cdot|$ denotes the absolute value of a scalar argument
(e.g., $|x|$) and the cardinality of a set argument (e.g., $|\mathcal{X}|$).

\section{System Model\label{sec:System-Model}}

Consider a downlink narrowband near-field \ac{xl-mimo} system as
illustrated in Fig.~\ref{fig:XL-MIMO-near-filed-communication}.
A transmitter equipped with $N$ antennas, arranged in an extremely
large \ac{ula}, serves a designated area. Assume there are $M$
single antenna receivers. \Ac{wlog}, the transmitter is positioned
along the y-axis, with the $n$-th antenna located at $\left(0,\delta_{n}\lambda/2\right)$,
where $\delta_{n}=(2n-N-1)/2$ for $n=1,2,\ldots,N$, $\lambda=c/f$
denotes the wavelength, $f$ is the operating frequency, and $c$
is the speed of light.
\begin{figure}
\centering\includegraphics[width=6cm,totalheight=8cm,height=5cm]{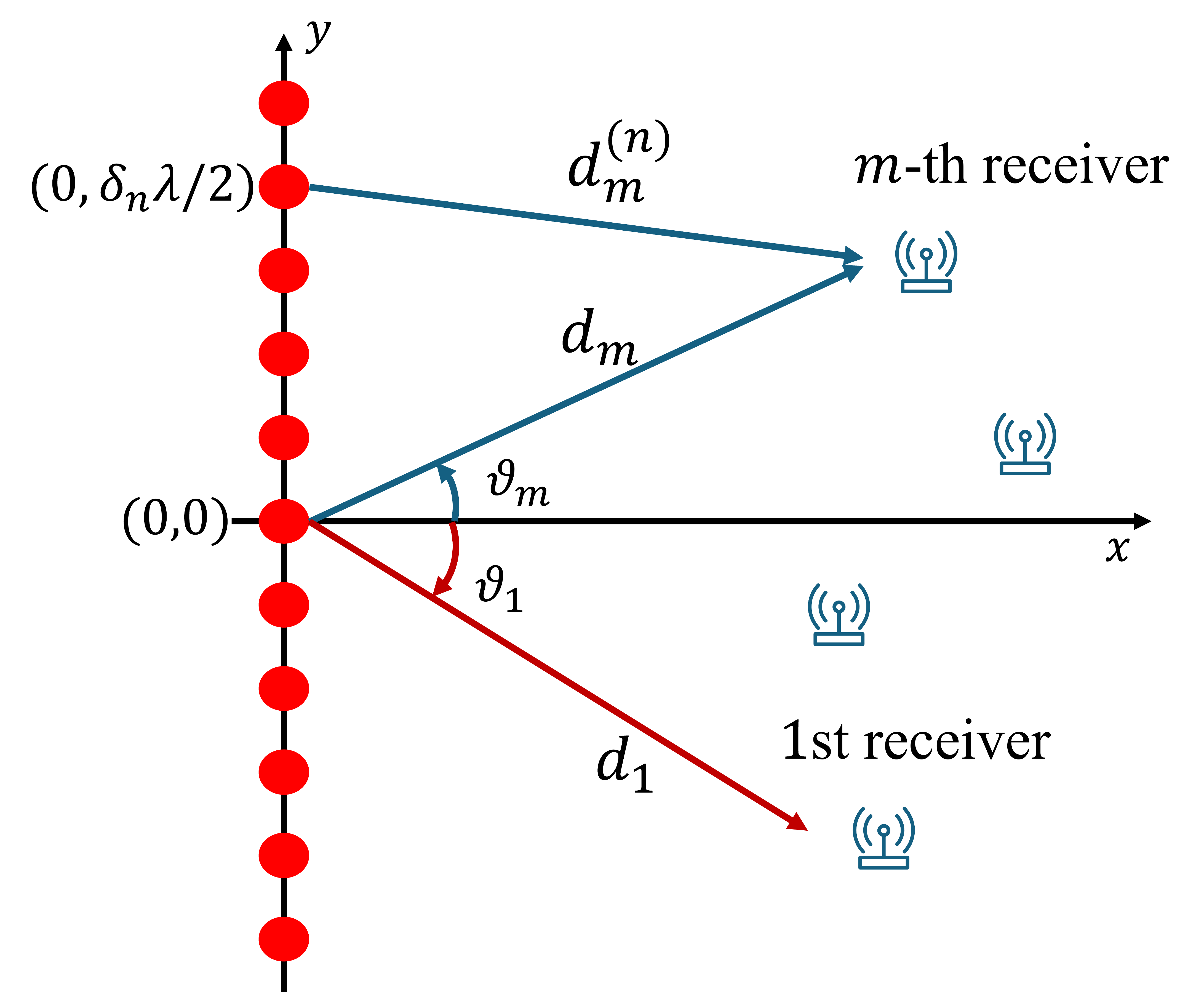}\caption{\label{fig:XL-MIMO-near-filed-communication}XL-MIMO near-field sensing
scenario.}
\end{figure}

The received signal $y_{m}$ at the $m$-th receiver is given by
\begin{equation}
y_{m}=\sqrt{P}\bm{h}_{m}^{\text{H}}\bm{v}s+n,\quad m=1,2,\cdots,M,\label{eq:}
\end{equation}
where $P>0$ is the transmit power, $\bm{h}_{m}\in\mathbb{C}^{N\times1}$
is the downlink channel vector, and $\bm{v}\in\mathbb{C}^{N\times1}$
denotes the transmit beamforming vector at the transmitter. The term
$s$ is the transmitted symbol, satisfying $\mathbb{E}[|s|^{2}]=1$
w.l.o.g., and $n\sim\mathcal{CN}\left(0,\xi^{2}\right)$ denotes the
additive receiver noise, where $\xi^{2}$ is the noise power.

The downlink channel steering vector $\bm{h}_{m}\in\mathbb{C}^{N\times1}$
is modeled as 
\begin{equation}
\bm{h}_{m}=\sqrt{\frac{N}{L}}\sum_{l=1}^{L}\beta_{l,m}\bm{a}\left(\vartheta_{l,m},d_{l,m}\right),\label{eq:channel model}
\end{equation}
where $L$ denotes the number of paths, which include the \ac{los}
and \ac{nlos} paths. The parameter $\beta_{l,m}$ represents the
path attenuation between the center of the antenna array and the $m$-th
reciever along the $l$-th path, and $\bm{a}(\vartheta_{l,m},d_{l,m})$
is the near-field array steering vector, where $\vartheta_{l,m}$
denotes the spatial angle from the center of the antenna array to
the $m$-th receiver along the $l$-th path, and $d_{l,m}$ is the
distance between the center of the antenna array and the $m$-th receiver
along the $l$-th path.

Due to the large aperture of the XL-MIMO array and the resulting near-field
propagation properties, the power of the NLOS paths is typically more
than 20 dB weaker than that of the LOS path \cite{YuXZhaHaeLet:J17,AkdLiuSamSun:J14}.
Consequently, the \ac{los} path typically dominates. Hence, \ac{wlog},
we assume $L=1$, simplifying the channel model (\ref{eq:channel model})
to 
\begin{equation}
\bm{h}_{m}=\sqrt{N}\beta_{m}\bm{a}(\vartheta_{m},d_{m}).\label{eq:LOS hm}
\end{equation}

The path attenuation $\beta_{m}$ of the \ac{los} path, including
the path loss and the shadowing effect, is modeled as 
\begin{equation}
\beta_{m}=\frac{\lambda}{4\pi d_{m}}10^{\varepsilon/20},\label{eq:beta m}
\end{equation}
where $\varepsilon$ is a random variable accounting for the effect
of shadowing in logarithmic (dB) scale with $\varepsilon\sim\mathcal{N}\left(0,\sigma^{2}\right)$.

Based on the spherical wavefront propagation model, the near-field
array steering vector $\bm{a}(\vartheta_{m},d_{m})$ is given by 
\begin{equation}
\bm{a}\left(\vartheta_{m},d_{m}\right)=\frac{1}{\sqrt{N}}\left[e^{-\jmath2\pi d_{m}^{(0)}/\lambda},\ldots,e^{-\jmath2\pi d_{m}^{(N-1)}/\lambda}\right]^{\text{T}},\label{eq:-1}
\end{equation}
where $d_{m}^{(n)}$ denotes the distance between the $m$-th receiver
and the $n$-th antenna at the transmitter, computed as
\begin{equation}
d_{m}^{(n)}=\sqrt{d_{m}^{2}+\delta_{n}^{2}\lambda^{2}/4-d_{m}\text{cos}(\vartheta_{m})\delta_{n}\lambda}.\label{eq:r m n}
\end{equation}

In the dB scale, from (\ref{eq:})\textendash (\ref{eq:beta m}),
the \ac{rss} $\gamma_{m}$ is then defined by
\begin{align}
\gamma_{m} & =10\text{log}_{10}\left(\left|\sqrt{P}\bm{h}_{m}^{\text{H}}\bm{v}s\right|^{2}\right)\nonumber \\
 & =10\text{log}_{10}\left(\left|\sqrt{\frac{P}{N}}\frac{\lambda}{4\pi d_{m}}\sum_{n=0}^{N-1}v_{n}e^{-\jmath\frac{2\pi}{\lambda}d_{m}^{(n)}}s\right|^{2}\right)+\varepsilon.\label{eq:gamma_ij}
\end{align}

To enable near\nobreakdash-field radio\nobreakdash-map construction,
we discretize the region of interest into an angular-distance grid,
yielding an $I\times J$ RSS matrix $\bm{\Gamma}\in\mathbb{R}^{I\times J}$,
where $I$ and $J$ are the numbers of quantized angular and radial
points, respectively. Each cell $(i,j)$ uniquely corresponds to a
spatial angle $\theta_{i}$ and radial distance $r_{j}$. Let $(i_{m},j_{m})$
denote the grid cell in which the $m$-th receiver lies, with spatial
angle $\theta_{i_{m}}$ and radial distance $r_{j_{m}}$. In other
words, the continuous angular $\vartheta_{m}$ and radial distance
$d_{m}$ of the $m$-th receiver are quantized to their discrete counterparts
associated with cell $(i_{m},j_{m})$. The measured RSS $\gamma_{m}$
at the $m$-th receiver is therefore assigned to cell $(i_{m},j_{m})$,
i.e., $\Gamma_{i_{m}j_{m}}=\gamma_{m}$.

Our goal in this paper is to reconstruct the complete RSS matrix $\bm{\Gamma}$
using only sparse observations $\{(\theta_{i_{m,}},r_{j_{m}},\Gamma_{i_{m}j_{m}})\}_{m=1}^{M}$
collected by a limited number of receivers.

\section{Structure-Aware Recovery Framework\label{sec:Structure-Aware-Recovery-Framewo}}

In this section, we introduce a structure-aware recovery framework
for near-field radio map reconstruction that jointly exploits the
local variation and global low-rank structures. Specifically, the
proposed approach leverages a physics-informed prior derived via RBF
interpolation, which is seamlessly integrated into a global low-rank
matrix completion formulation. We first describe the overall recovery
framework. Then, we detail three critical components: (i) the construction
of a high-fidelity local prior informed by near-field propagation
physics, (ii) a theoretically grounded non-uniform sampling strategy
that enhances construction accuracy, and (iii) an adaptive fusion
mechanism that balances the local prior and global model via selection
of the tolerance parameter.

\subsection{Overall Formulation}

Recall that $\bm{\Gamma}\in\mathbb{R}^{I\times J}$ denotes the ground
truth RSS matrix. There are sparse observations $\{\Gamma_{ij}\}$
available at a subset of entries, where the subscript $m$ below $i$
and $j$ is omitted for notation simplicity. As illustrated in Fig.~\ref{fig:(a)-Ground-truth rm},
the ground truth near-field RSS matrix exhibits a dominant low-rank
structure with energy concentrated in a few singular values, thereby
motivating the use of low-rank recovery.
\begin{figure}
\subfigure{\includegraphics[width=0.5\columnwidth]{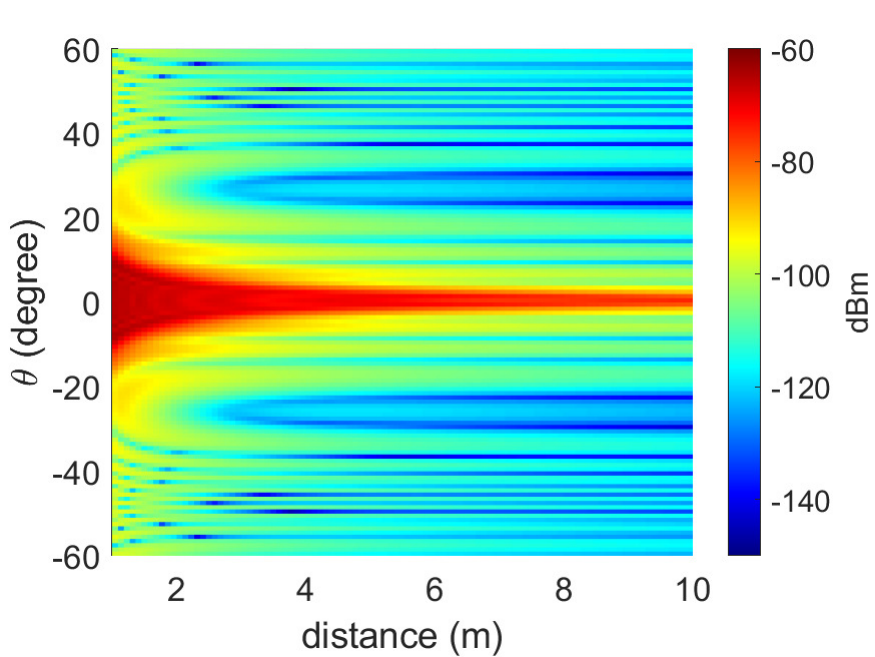}}\subfigure{\includegraphics[width=0.5\columnwidth]{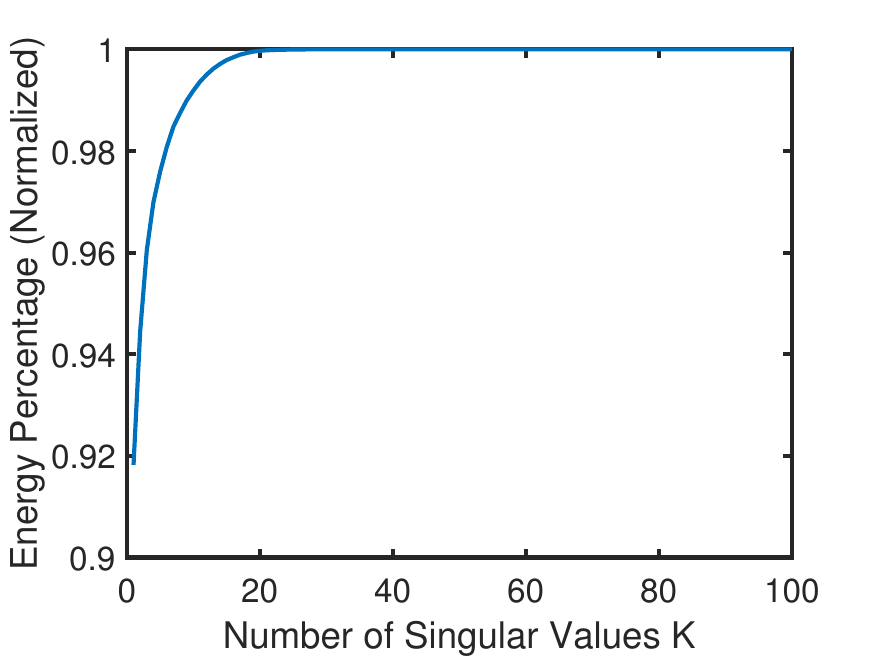}}

\caption{\label{fig:(a)-Ground-truth rm}(a) Ground truth of near-field radio
map. (b) Percentage of the sum of the first $K$ dominant singular
values over the sum of all singular values for a $100$ by $100$
near-field radio map matrix.}
\end{figure}
 In addition to its low-rank nature, the RSS matrix also exhibits
strong local correlations arising from the spatial continuity of radio
signal propagation. To exploit both properties, we formulate a regularized
\ac{nnm} problem as
\begin{equation}
\begin{aligned}\underset{\bm{Z}\in\mathbb{R}^{I\times J}}{\text{minimize}}\quad & \|\bm{Z}\|_{*}\\
\text{subject to}\quad & \hat{\bm{\Gamma}}=f(\mathcal{P}_{\Omega}(\bm{\Gamma}))\\
 & |Z_{ij}-\hat{\Gamma}_{ij}|\leq\delta,\ \forall(i,j)
\end{aligned}
,\label{eq:MC2}
\end{equation}
where $\bm{Z}\in\mathbb{R}^{I\times J}$ is the optimization variable
representing the reconstructed RSS matrix, and $\|\bm{Z}\|_{*}$ promotes
a low-rank structure. $f(\mathcal{P}_{\Omega}(\bm{\Gamma}))$ denotes
an interpolation operator applied to the observations indexed by $\Omega$
with cardinality $|\Omega|=M$. The sampling operator $\mathcal{P}_{\Omega}:\mathbb{R}^{I\times J}\to\mathbb{R}^{I\times J}$
is defined by:
\[
[\mathcal{P}_{\Omega}(\bm{\Gamma})]_{ij}=\begin{cases}
\Gamma_{ij}, & (i,j)\in\Omega\\
\text{unknow}, & \text{otherwise}
\end{cases}.
\]
The parameter $\delta>0$ controls the fidelity of $\bm{Z}$ to the
interpolated prior $\hat{\bm{\Gamma}}$. A small $\delta$ forces
the recovered matrix to closely follow the interpolated prior, emphasizing
local signal variations, whereas a large $\delta$ relaxes the constraint,
allowing $\bm{Z}$ to deviate from the prior and rely more heavily
on \ac{nnm}, thereby enhancing the enforcement of global low-rank
structure. Therefore, an appropriate choice of $\delta$ is crucial
for balancing local accuracy and global consistency.

The effectiveness of the proposed framework hinges on 1) the accurate
construction of the prior $\hat{\bm{\Gamma}}$ to reflect the spatial
characteristics of near-field RSS; 2) the strategic selection of sampling
locations to maximize measurement informativeness; 3) and the data-driven
yet theoretically grounded tuning of the fusion parameter $\delta$,
which governs the integration level between the interpolated prior
and the recovered low-rank matrix. In the following, we shall present
these three key design aspects.

\subsection{Local Prior Construction via Regularized RBF Interpolation\label{subsec:Modeling-Near-Field-Physics}}

In this subsection, we construct a high-fidelity interpolated prior
$\hat{\bm{\Gamma}}$ that captures the spatial characteristics of
near-field RSS.

Direct interpolation over \ac{2d} spatial domains typically requires
dense sampling, since under spherical wave propagation, the \ac{rss}
exhibits sharp variations along the angular direction, as illustrated
in Fig. \ref{fig:The-RSS-versus direction with fixed r}. To alleviate
this issue, we conduct a near-field sensitivity analysis to justify
adopting a 1D interpolation scheme along the radial direction, rather
than performing full 2D interpolation that demands dense angular sampling.
\begin{figure}
\includegraphics{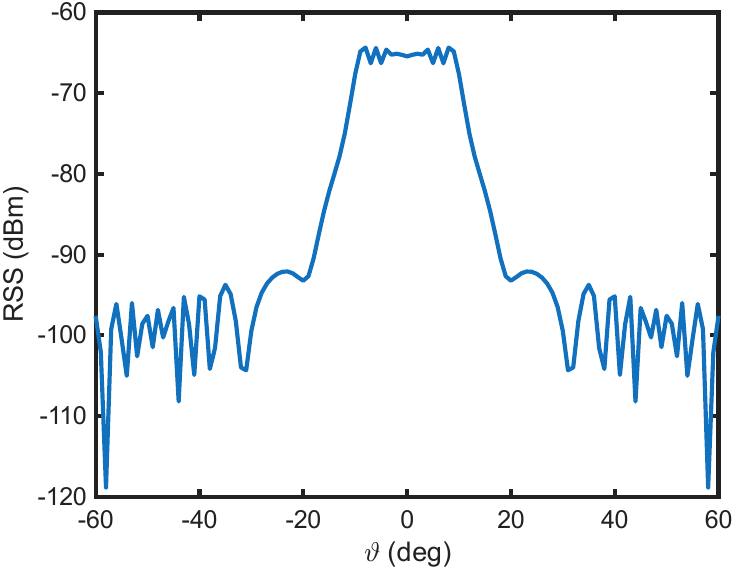}

\caption{\label{fig:The-RSS-versus direction with fixed r}RSS versus direction
$\vartheta$ at a fixed distance $d$. The RSS exhibits a sharp main
lobe with multiple sidelobes.}
\end{figure}

We define the near-field array factor in (\ref{eq:gamma_ij}) as $S(d,\vartheta)=\frac{\lambda}{4\pi d}\sum_{n=0}^{N-1}v_{n}e^{-\jmath\tfrac{2\pi}{\lambda}d^{(n)}(d,\vartheta)}$.
\begin{thm}[Angular vs. Radial Sensitivity]
\label{thm:The-partial-derivatives}The partial derivatives of the
near-field array factor magnitude $|S(d,\vartheta)|$ \ac{wrt} angular
and radial perturbations satisfy
\begin{align}
\frac{\partial|S|}{\partial\vartheta} & \leq\text{cos}(\phi_{\vartheta})\frac{5\lambda|\sin\vartheta|}{16d_{\text{min}}}N^{2}\label{eq: sensitivity angle}\\
\frac{\partial|S|}{\partial d} & \leq\text{cos}(\phi_{d})\left(\frac{\lambda}{4\pi d^{2}}+\frac{1}{2d}\right)N,\label{eq:sensitivity distance}
\end{align}
where $\cos\phi_{x}$ is a phase-alignment factor and is independent
of the array size $N$, $d_{min}$ represents the minimal distance
from the measurement location to the array.
\end{thm}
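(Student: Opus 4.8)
\emph{Proof strategy.} The plan is to separate the magnitude from the phase and then bound each complex derivative. For any differentiable complex-valued $S(x)\neq 0$,
\[
\frac{\partial |S|}{\partial x}=\mathrm{Re}\!\left\{\frac{\bar S}{|S|}\,\frac{\partial S}{\partial x}\right\}=\cos\phi_x\left|\frac{\partial S}{\partial x}\right|,
\]
where $\cos\phi_x:=\mathrm{Re}\{\bar S\,\partial_x S\}/(|S|\,|\partial_x S|)\in[-1,1]$ is the cosine of the angle between $\partial_x S$ and $S$ in the complex plane; it is bounded by $1$ irrespective of $N$, which is exactly the phase-alignment factor in the statement. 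Hence it suffices to upper-bound $|\partial_\vartheta S|$ and $|\partial_d S|$ and carry $\cos\phi_x$ through unchanged. Both bounds go by the chain rule through the per-antenna ranges $d^{(n)}=\sqrt{d^{2}+\delta_n^{2}\lambda^{2}/4-d\lambda\delta_n\cos\vartheta}$; differentiating $(d^{(n)})^{2}$ and dividing by $2d^{(n)}$,
\[
\frac{\partial d^{(n)}}{\partial\vartheta}=\frac{d\lambda\delta_n\sin\vartheta}{2d^{(n)}},\qquad
\frac{\partial d^{(n)}}{\partial d}=\frac{d-\tfrac12\lambda\delta_n\cos\vartheta}{d^{(n)}}.
\]

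For the angular inequality, I would substitute $\partial_\vartheta d^{(n)}$ into $\partial_\vartheta S=\tfrac{\lambda}{4\pi d}\sum_n v_n(-\jmath\tfrac{2\pi}{\lambda})\,\partial_\vartheta d^{(n)}\,e^{-\jmath\frac{2\pi}{\lambda}d^{(n)}}$ and apply the triangle inequality, using $d^{(n)}\ge d_{\min}$ (and the near-field estimate $d/d^{(n)}=O(1)$), $|\delta_n|\le(N-1)/2$, and $|v_n|\le 1$ (true whenever $\|\bm v\|_2\le 1$), so that $\sum_n|v_n||\delta_n|\le\sum_n|\delta_n|=O(N^{2})$; the $|\sin\vartheta|$ factor is exposed directly, and the elementary constants collect into $5/16$. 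The $N^{2}$ scaling is structural: the angular lever arm of antenna $n$ grows like $\delta_n$, so summing $|\delta_n|$ over the $N$ elements already costs one factor of $N$ on top of the array-gain factor.

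For the radial inequality, I would write $\partial_d S=-\tfrac{\lambda}{4\pi d^{2}}g+\tfrac{\lambda}{4\pi d}\,\partial_d g$ with $g=\sum_n v_n e^{-\jmath\frac{2\pi}{\lambda}d^{(n)}}$ and bound the two terms separately. The first gives $\tfrac{\lambda}{4\pi d^{2}}|g|\le\tfrac{\lambda}{4\pi d^{2}}\sum_n|v_n|\le\tfrac{\lambda}{4\pi d^{2}}N$. The crux --- the only step that is not routine calculus --- is the completion of squares $(d^{(n)})^{2}=(d-\tfrac12\lambda\delta_n\cos\vartheta)^{2}+\tfrac14\lambda^{2}\delta_n^{2}\sin^{2}\vartheta$, which yields $|d-\tfrac12\lambda\delta_n\cos\vartheta|\le d^{(n)}$ and hence the uniform bound $|\partial_d d^{(n)}|\le 1$ for all $n,d,\vartheta$; then $|\partial_d g|\le\tfrac{2\pi}{\lambda}\sum_n|v_n|\le\tfrac{2\pi}{\lambda}N$ contributes the $\tfrac{1}{2d}N$ term, and factoring out $\cos\phi_d$ completes the bound. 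Only $N$ appears here because, unlike the angular slopes, each radial slope is individually bounded and the contributions do not accumulate with $\delta_n$ --- and this very contrast is what the theorem asserts and what motivates the 1D radial interpolation scheme.

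The main obstacle is that radial bound. Written out, $\partial_d d^{(n)}=(2d-\lambda\delta_n\cos\vartheta)/(2d^{(n)})$ superficially looks as if it could be large when $d^{(n)}$ is small, and it is only the sum-of-squares rewriting that exhibits $|\partial_d d^{(n)}|\le 1$ cleanly, with no Taylor-remainder terms left to control --- this is what makes the exact $O(N)$ statement available rather than merely asymptotic. A secondary, purely bookkeeping-level difficulty is propagating the numerical constants (the $5/16$ and the split $\tfrac{\lambda}{4\pi d^{2}}+\tfrac1{2d}$) through the elementary estimates on $\sum_n|\delta_n|$, $d/d^{(n)}$, and $|v_n|$; these constants are inessential to the $N^{2}$-versus-$N$ dichotomy that the theorem is really about.
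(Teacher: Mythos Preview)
Your proposal is correct and follows essentially the same route as the paper's proof: the same phase-alignment identity $\partial|S|/\partial x=\cos\phi_x\,|\partial S/\partial x|$, the same triangle-inequality bound on $|\partial_\vartheta S|$ via $\sum_n|\delta_n|\le N^{2}/4+1$ and $d^{(n)}\ge d_{\min}$, and---for the radial bound---exactly the completing-the-squares identity $(d^{(n)})^{2}=(d-\tfrac12\lambda\delta_n\cos\vartheta)^{2}+(\tfrac12\lambda\delta_n\sin\vartheta)^{2}$ that you correctly flag as the crux. One minor simplification relative to your sketch: the factor $d$ in $\partial_\vartheta d^{(n)}=d\lambda\delta_n\sin\vartheta/(2d^{(n)})$ cancels exactly against the $1/(2d)$ prefactor of $\partial_\vartheta S$, so no separate $d/d^{(n)}=O(1)$ estimate is needed.
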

\begin{proof} Please see Appendix \ref{sec:Proof-of-Lemma non smooth}.
\end{proof}

Theorem \ref{thm:The-partial-derivatives} characterizes the sensitivity
of the near-field array factor magnitude $|S(d,\vartheta)|$ \ac{wrt}
angular and radial perturbations, highlighting critical distinctions
in their relative behavior.
\begin{rem}
Angular Sensitivity: The angular derivative exhibits an $O(N^{2})$
envelope, scaled by $\lambda|\sin\vartheta|/(4d_{\min})$ and modulated
by a phase-alignment factor $\cos\phi_{\vartheta}\in[-1,1]$. Consequently,
even small angular perturbations can induce large RSS fluctuations,
consistent with the narrow main lobes and dense sidelobes in Fig.~\ref{fig:The-RSS-versus direction with fixed r}.
Compared with the radial derivative (of order $O(N)$), the angular
sensitivity grows faster with $N$, making angular interpolation the
sampling bottleneck as $N$ increases.
\end{rem}
\begin{rem}
Radial Smoothness: In contrast, the radial derivative grows only linearly
with $N$ and decays smoothly with $1/d$, indicating much gentler
RSS variations along the radial dimension. This is visually corroborated
by Fig.~\ref{fig:RSS-versus-distance}, where the RSS changes gradually
with distance. Hence, to cope with the severe angular sensitivity
while maintaining manageable sampling overhead, we strategically perform
one-dimensional interpolation along the radial direction for each
fixed angle. This approach leverages the smoother radial structure
to construct high-fidelity near-field radio maps without incurring
excessive angular sampling burden. 
\begin{figure}
\includegraphics{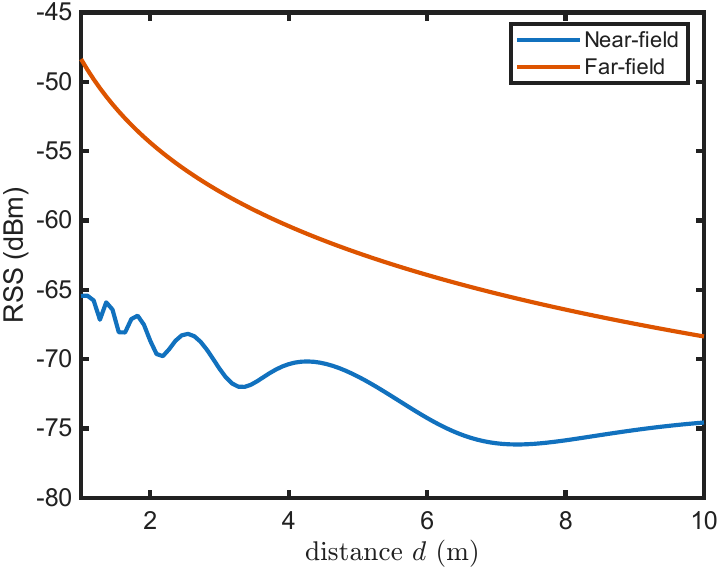}\caption{\label{fig:RSS-versus-distance}RSS versus distance $d$ along a fixed
spatial angle $\vartheta=0^{\circ}$ under $N=256$ antenna elements
in the near- and far-field models. The near-field RSS exhibits more
rapid fluctuations due to spherical wavefront effects, in contrast
to the smoother decay observed in the approximation by far-field model.}
\end{figure}
\end{rem}
Therefore, rather than performing full two-dimensional interpolation
over the angular-radial domain, we focus on reconstructing RSS values
along the radial dimension for each fixed spatial angle $\vartheta$.

Let $\mathcal{S}_{i}=\{(\theta_{i},r_{j},\Gamma_{ij})\}$ denote the
set of available measurements at a fixed spatial angle $\theta_{i}$,
where $j\in\Omega_{i}$. Here, $\Omega_{i}=\{j\mid(\theta_{i},r_{j},\Gamma_{ij})\in\mathcal{S}_{i}\}$
represents the set of indices corresponding to radial measurement
locations associated with the fixed spatial angle $\theta_{i}$.

To capture the spatial variation of RSS along the radial dimension,
an appropriate interpolation scheme is required. Classical methods
such as \ac{lpr}, and \ac{knn} rely on the choice of a local window
size or the number of neighbors $k$. In \ac{lpr}, the model order
is also hard to choose: a low order yields lower accuracy, while a
high order may overfit, hindering their effective use. \ac{rbf} interpolation,
by contrast, provides a flexible kernel-based framework; different
kernels can be matched to different scenarios, and regularization
offers robustness to noise. Therefore, to model the near-filed \ac{rss}
variation, we adopt a regularized \ac{rbf} interpolation approach.

For each fixed angle $\theta_{i}$, the RSS function $\rho_{i}(d)$
is modeled as
\begin{equation}
\rho_{i}(d)=\sum_{j=1}^{|\Omega_{i}|}\lambda_{j}\,\phi\left(|d-r_{j}|\right)+c,\label{eq:rho_i}
\end{equation}
where $\{\lambda_{j}\}$ are interpolation weights, $\phi(\cdot)$
is the RBF kernel (we use the multiquadric kernel, to be detailed
in Section \ref{sec:Simulation-Result}), and $c$ is a constant term.
The inclusion of $c$ mitigates boundary errors and stabilizes the
solution near domain edges.

To solve for the coefficients $\{\lambda_{j}\}$ and $c$ in \eqref{eq:rho_i},
we enforce the interpolation conditions
\begin{equation}
\rho_{i}(r_{j})=\Gamma_{ij},\quad j=1,2,\ldots,|\Omega_{i}|,\label{eq:interpolation_conditions}
\end{equation}
together with the constraint
\begin{equation}
\sum_{j=1}^{|\Omega_{i}|}\lambda_{j}=0\label{eq:zero_sum_constraint}
\end{equation}
 to ensure uniqueness of the solution.

For computational efficiency, we reformulate these conditions using
matrix notation. Let $\ensuremath{|\Omega_{i}|=K}$ and define the
following variables:
\begin{itemize}
\item The RBF kernel matrix $\bm{\Phi}\in\mathbb{R}^{K\times K}$, with
entries
\begin{equation}
\Phi_{mn}=\phi\left(|r_{m}-r_{n}|\right).
\end{equation}
\item The measurement vector
\begin{equation}
\bm{\gamma}_{i}=[\Gamma_{i1},\Gamma_{i2},\dots,\Gamma_{iK}]^{\mathrm{T}}\in\mathbb{R}^{K}.
\end{equation}
\item The RBF coefficient vector
\begin{equation}
\bm{\lambda}=[\lambda_{1},\lambda_{2},\dots,\lambda_{K}]^{\mathrm{T}}\in\mathbb{R}^{K},
\end{equation}
\item The all-one vector $\mathbf{1}_{K}=[1\ 1\ \cdots\ 1]^{\mathrm{T}}\in\mathbb{R}^{K}.$
\end{itemize}
Then, the interpolation conditions and the zero-sum constraint in
\eqref{eq:interpolation_conditions} and \eqref{eq:zero_sum_constraint}
can be expressed compactly as the linear system:
\begin{equation}
\begin{bmatrix}\bm{\Phi} & \mathbf{1}_{K}\\
\mathbf{1}_{K}^{\text{T}} & 0
\end{bmatrix}\begin{bmatrix}\bm{\lambda}\\
c
\end{bmatrix}=\begin{bmatrix}\bm{\gamma}_{i}\\
0
\end{bmatrix}.\label{eq:rbf_linear_system}
\end{equation}

Given that the multiquadric kernel is strictly positive definite,
$\bm{\Phi}$ is invertible. The solution can be expressed in closed
form as
\begin{align}
c & =\left(\mathbf{1}_{K}^{\text{T}}\bm{\Phi}^{-1}\mathbf{1}_{K}\right)^{-1}\left(\mathbf{1}_{K}^{\text{T}}\bm{\Phi}^{-1}\bm{\gamma}_{i}\right),\label{eq:c_exp}\\
\bm{\lambda} & =\bm{\Phi}^{-1}\left(\bm{\gamma}_{i}-\mathbf{1}_{K}\left(\mathbf{1}_{K}^{\text{T}}\bm{\Phi}^{-1}\mathbf{1}_{K}\right)^{-1}\mathbf{1}_{K}^{\text{T}}\bm{\Phi}^{-1}\bm{\gamma}_{i}\right).\label{eq:lambda}
\end{align}

\subsection{Adaptive Sampling for RBF Interpolation}

In this subsection, we allocate sampling locations to enhance interpolation
accuracy under a fixed measurement budget. Due to spherical wave propagation,
the near-field RSS exhibits steep radial gradients near the transmitter
and smoother variations at larger distances. Combined with the RBF
error bound, this motivates allocating more samples to regions with
high variability. To this end, we develop a non-uniform sampling strategy
based on the inverse $\mu$-law transformation, which concentrates
sampling density near the transmitter.

\subsubsection{Interpolation Error Bound Analysis}

To draw the theoretical grounded insight on the choice of sampling
strategy, we analyze the interpolation error associated with multiquadric
RBF. The following definitions and theorem establish a rigorous foundation
for sampling strategy design.
\begin{defn}[Fill Distance]
Let $R=\{r_{1},\cdots,r_{N}\}$ be a set of points in a bounded domain
$\mathcal{D}$. The fill distance $h$ of $R$ \ac{wrt} $\mathcal{D}$
is defined as
\[
h=\underset{d\in\mathcal{D}}{\text{sup}}\underset{1\leq j\leq N}{\text{min}}|d-r_{j}|,
\]
which quantifies how effectively the points in $R$ cover the domain
$\mathcal{D}$. Intuitively, the fill distance $h$ represents the
radius of the largest empty ball that can fit among the measurement
points.
\end{defn}
\begin{defn}[Essentially Bounded]
Let $f:\mathcal{D}\to\mathbb{R}$ be a measurable function. The essential
supremum norm of $f$ is defined as
\[
\|f\|_{L_{\infty}(\mathcal{D})}=\inf\left\{ C_{0}\geq0:|f(d)|\leq C_{0}\text{ a.e. on }\mathcal{D}\right\} ,
\]
where \textquotedblleft a.e.\textquotedblright{} denotes almost everywhere.
This norm measures the smallest bound that holds almost everywhere
in $\mathcal{D}$.
\end{defn}
Let $\mathcal{N}_{\phi}(\mathcal{D})$ denote the native space of
the reproducing kernel $\phi$. The following theorem, adapted from
\cite{Wen:B04}, provides an upper bound for the interpolation error
using RBF interpolation.
\begin{thm}[Error bound]
\label{thm:Suppose-that-} Consider $\phi(d)=(1+d^{2})^{\beta}$,
and define $f(d)=\phi(\sqrt{d})$. Then for every integer $\ell\ge\lceil\beta\rceil$,
the $\ell$th derivative of $f$ satisfies
\begin{equation}
\left|f^{(\ell)}(d)\right|\le\ell!C_{1}^{\ell},\quad\forall d\ge0,\label{eq:derivative condition}
\end{equation}
where $C_{1}=(1+d)^{\beta/\ell-1}(1+|\beta+1|)$ is a constant.

In addition, there exists a constant $c>0$ such that for any sufficiently
small fill distance $h$, the interpolation error between a function
$f\in\mathcal{N}_{\phi}(\mathcal{D})$ and its interpolant $\rho$
is bounded by
\[
\left\Vert f-\rho\right\Vert _{L_{\infty}(\mathcal{D})}\leq e^{-c/h}|f|_{\mathcal{N}_{\phi}(\mathcal{D})},
\]
where $|\cdot|_{\mathcal{N}_{\phi}(\mathcal{D})}$ is the native seminorm,
closely related to the magnitude of higher-order derivatives of $f$.
\end{thm}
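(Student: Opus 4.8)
The statement has two parts, and I would handle them separately: the derivative bound \eqref{eq:derivative condition} is pure calculus and can be proved from scratch, while the exponential error estimate is the known spectral-convergence result for interpolation with (conditionally) positive definite analytic kernels, so I would reduce it to the framework of \cite{Wen:B04}, using the first part to verify the hypothesis on the kernel.

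\textbf{Part 1: the derivative bound.} First I would note that $f(d)=\phi(\sqrt d)=(1+(\sqrt d)^2)^\beta=(1+d)^\beta$, which is smooth on $[0,\infty)$ with
\[
f^{(\ell)}(d)=\Bigl[\textstyle\prod_{k=0}^{\ell-1}(\beta-k)\Bigr](1+d)^{\beta-\ell}.
\]
The only work left is to bound the generalized falling factorial. Writing $\beta=(\beta+1)-1$, the triangle inequality gives $|\beta-k|\le|\beta+1|+(k+1)\le(1+|\beta+1|)(k+1)$ for every $k\ge0$, hence
\[
\textstyle\prod_{k=0}^{\ell-1}|\beta-k|\le(1+|\beta+1|)^{\ell}\,\ell!.
\]
Combining this with $(1+d)^{\beta-\ell}=\bigl[(1+d)^{\beta/\ell-1}\bigr]^{\ell}$ yields $|f^{(\ell)}(d)|\le\ell!\,C_1^{\ell}$ with $C_1=(1+|\beta+1|)(1+d)^{\beta/\ell-1}$, which is exactly \eqref{eq:derivative condition}. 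The restriction $\ell\ge\lceil\beta\rceil$ is what makes $\beta/\ell\le1$, so that $C_1$ stays uniformly bounded in $d$ over the bounded domain $\mathcal D$; otherwise the argument is unconditional.

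\textbf{Part 2: the exponential error estimate.} For the second claim I would invoke the classical spectral-convergence theory for interpolation with infinitely smooth kernels (the generalized multiquadric $\phi(r)=(1+r^2)^\beta$), as developed in \cite{Wen:B04} (with the original estimates due to Madych and Nelson). The standard chain of reasoning is: (i) the pointwise interpolation error at any $d\in\mathcal D$ is dominated by the power function, $|f(d)-\rho(d)|\le P_{\phi,R}(d)\,|f|_{\mathcal N_\phi(\mathcal D)}$; (ii) $P_{\phi,R}(d)$ is bounded by a local polynomial-reproduction argument whose remainder is the Taylor remainder of the kernel profile $d\mapsto\phi(\sqrt d)$ — and this is precisely controlled by the derivative growth $|f^{(\ell)}(d)|\le\ell!\,C_1^{\ell}$ established in Part 1; (iii) optimizing the polynomial degree against the fill distance $h$, on a domain satisfying an interior cone condition, collapses the bound to $P_{\phi,R}(d)\le e^{-c/h}$ for all sufficiently small $h$, with $c>0$ depending only on $\beta$ (and the shape parameter) and on the geometry of $\mathcal D$, not on $f$ nor on $R$ except through $h$. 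Taking the supremum over $d\in\mathcal D$ gives $\|f-\rho\|_{L_\infty(\mathcal D)}\le e^{-c/h}\,|f|_{\mathcal N_\phi(\mathcal D)}$; adapting to our setting only requires specializing to the one-dimensional radial domain at each fixed angle, and noting that the conditional positive definiteness of the multiquadric is already accommodated by the constant term and zero-sum constraint in \eqref{eq:rbf_linear_system}.

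\textbf{Expected obstacle.} Part 1 is routine. The substance is in steps (ii)--(iii) of Part 2, which rely on Wendland's power-function and sampling-inequality machinery together with the identification of the native space $\mathcal N_\phi(\mathcal D)$ of the multiquadric. Rather than reproducing that argument — it is essentially the content of \cite{Wen:B04} — I would present the derivative bound of Part 1 in full and, for the error estimate, state precisely which hypotheses of the cited theorem are being checked, namely the kernel-profile derivative growth \eqref{eq:derivative condition} and the regularity of $\mathcal D$, and then quote the conclusion.
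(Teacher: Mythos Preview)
Your proposal is correct and follows essentially the same route as the paper: compute $f^{(\ell)}(d)=\bigl[\prod_{k=0}^{\ell-1}(\beta-k)\bigr](1+d)^{\beta-\ell}$, bound each factor by $|\beta-k|\le(1+|\beta+1|)(k+1)$ (the paper writes the equivalent $|(\beta-j)/(j+1)|\le1+|\beta+1|$), absorb $(1+d)^{\beta-\ell}$ into $C_1^{\ell}$, and then invoke Wendland's spectral-convergence theorem (Theorem~11.22 in \cite{Wen:B04}) with this derivative growth condition as the verified hypothesis. Your treatment is in fact slightly tighter in one respect: you note that $\ell\ge\lceil\beta\rceil$ forces $\beta/\ell-1\le0$, so $C_1$ is bounded uniformly on $[0,\infty)$, whereas the paper appeals to the boundedness of $\mathcal D$ to control $(1+d)^{\beta-\ell}$.
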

\begin{proof} Please see Appendix \ref{sec:Proof-of-Proposition}.
\end{proof}

Since the multiquadric RBF chosen in \eqref{eq:rho_i} is precisely
of the form $\phi(d)=(1+\epsilon d^{2})^{\beta}$ with $\beta=1/2$
and $\epsilon=1$, Theorem~\ref{thm:Suppose-that-} becomes directly
applicable, thereby providing explicit interpolation error bounds
for the multiquadric RBF interpolation.

We highlight two critical insights from this error analysis in the
following.
\begin{rem}
The interpolation error bound derived in Theorem \ref{thm:Suppose-that-}
indicates that as RSS measurements become denser, i.e., $h\to0$,
interpolation error decreases exponentially. Therefore, RBF interpolation
can achieve high accuracy if the measurements are sufficiently dense.
Consequently, the fill distance $h$ directly governs the achievable
interpolation accuracy.
\end{rem}
\begin{rem}
If the near-field RSS function exhibits pronounced spatial variability,
the seminorm $|f|_{\mathcal{N}_{\phi}(\mathcal{D})}$ increases, leading
to higher interpolation errors. One effective strategy to mitigate
this is to reduce the fill distance $h$ by increasing the measurement
density. However, in practical settings where the number of measurements
is constrained, a more feasible solution is to adopt a non-uniform
sampling strategy. This strategy allocates more samples to regions
with high signal variation and fewer samples to smoother regions,
thereby optimizing accuracy under a limited measurements.
\end{rem}

\subsubsection{Inverse $\mu$-Law Non-Uniform Sampling}

Motivated by the preceding error analysis and the observation that
near-field RSS exhibits steep gradients near the transmitter and smoother
variations at greater distances, we propose a non-uniform sampling
strategy based on the inverse $\mu$-law transformation. This approach
effectively concentrates samples near the transmitter, where the interpolation
error is most sensitive to spatial resolution.

Given a scalar input $x\in[0,1]$ and a companding parameter $\mu>0$,
the $\mu$-law transformation $F(x)$ is defined as
\begin{equation}
F(x)=\frac{\ln(1+\mu x)}{\ln(1+\mu)}.
\end{equation}

Directly applying the forward $\mu$-law transformation compresses
the lower portion of the domain and stretches the upper portion. However,
this contradicts the spatial requirements in near-field scenarios,
where higher resolution is needed near the transmitter (i.e., at the
lower end of the spatial domain). To address this issue, we instead
employ the inverse $\mu$-law transformation:
\[
F^{-1}(y)=\frac{(1+\mu)^{y}-1}{\mu},\quad y\in[0,1],
\]
which maps uniformly spaced values $y$ into a set of samples that
are concentrated near zero.

The non-uniform sampling procedure is implemented as follows:
\begin{itemize}
\item Generate $N$ independent uniform samples $u_{i}\sim\mathcal{U}(0,1)$,
uniformly distributed over $[0,1]$.
\item Apply the inverse $\mu$-law transformation to obtain
\begin{equation}
y_{i}=\frac{(1+\mu)^{u_{i}}-1}{\mu}.
\end{equation}
\item Map the transformed samples $y_{i}\in[0,1]$ to the target spatial
interval $[z_{0},z_{1}]$ via linear scaling:
\begin{equation}
r_{i}=z_{0}+y_{i}(z_{1}-z_{0}).
\end{equation}
\end{itemize}
The resulting spatial sample set $\{r_{i}\}$ is thus non-uniformly
distributed over $[z_{0},z_{1}]$, with a concentration of samples
near $z_{0}$, i.e., in close proximity to the transmitter.

The clustering effect is controlled by the parameter $\mu$, where
larger values yield stronger concentration near the lower end of the
domain. This non-uniform allocation efficiently reduces interpolation
error in regions with high RSS variability while maintaining overall
sampling efficiency.

\subsection{Adaptive Fusion of RBF Prior and Low-Rank Model\label{subsec:Fusing-the-RBF}}

In this subsection, we effectively fuse the RBF-interpolated prior
with the global low-rank structure discussed in \eqref{eq:MC2} to
ensure robust reconstruction. In practice, excessive reliance on the
RBF prior may propagate local interpolation errors, while an overly
loose constraint may disregard informative local variations. The tolerance
parameter $\delta$ governs this trade-off by controlling the allowable
deviation from the prior. We embed the RBF prior as inequality constraints
in an \ac{nnm} problem and adaptively determine $\delta$ through
a Huberized \ac{loocv} procedure.

\subsubsection{Integration of RBF Prior into Low-Rank Recovery}

Classical matrix completion methods leverage the assumption that the
underlying data matrix is low-rank, aiming to recover it from a small
subset of observed entries.

After constructing the prior $\hat{\bm{\Gamma}}$, we integrate it
into the low-rank recovery framework. This involves solving the main
optimization problem \eqref{eq:MC2} and adaptively setting the tolerance
parameter $\delta$. Based on the RBF interpolation result, \eqref{eq:MC2}
can be reformulated as:
\begin{equation}
\begin{aligned} & \underset{\bm{Z}}{\text{minimize}} &  & \|\bm{Z}\|_{*}\\
 & \text{subject to} &  & |Z_{ij}-\hat{\Gamma}_{ij}|\leq\delta,\ \forall(i,j)\\
 &  &  & \hat{\Gamma}_{ij}=\sum_{k\in\Omega_{i}}\lambda_{i,k}\phi_{j,k}+c_{i}\\
 &  &  & \bm{\lambda}_{i}=\bm{\Phi}_{i}^{-1}\left(\bm{\gamma}_{i}-\mathbf{1}_{K}\left(\mathbf{1}_{K}^{\text{T}}\bm{\Phi}_{i}^{-1}\mathbf{1}_{K}\right)^{-1}\mathbf{1}_{K}^{\text{T}}\bm{\Phi}_{i}^{-1}\bm{\gamma}_{i}\right)\\
 &  &  & c_{i}=\left(\mathbf{1}_{K}^{\text{T}}\bm{\Phi}_{i}^{-1}\mathbf{1}_{K}\right)^{-1}\left(\mathbf{1}_{K}^{\text{T}}\bm{\Phi}_{i}^{-1}\bm{\gamma}_{i}\right).
\end{aligned}
\label{eq:MC}
\end{equation}
Here, $\phi_{j,k}$ denotes the radial basis function evaluated at
location $r_{j}$ with respect to center $r_{k}$, and $\delta$ controls
the allowable deviation between the recovered matrix $\bm{Z}$ and
the RBF prior. The coefficients $c_{i}$ and $\bm{\lambda}_{i}$ are
from \eqref{eq:c_exp}\textendash \eqref{eq:lambda}.

The resulting problem (\ref{eq:MC}) is a convex optimization program
comprising a nuclear norm objective and linear inequality and equality
constraints. It can be efficiently solved using the standard interior
point method \cite{PotWri:J00}.

\subsubsection{Adaptive Parameter Selection via Huberized LOOCV}

The tolerance parameter $\delta$ plays a pivotal role in balancing
the influence of the RBF prior and the global low-rank constraint.
A small $\delta$ may lead to overfitting to the prior, whereas a
large $\delta$ may cause the recovery to ignore valuable local structure.

To automatically determine a proper $\delta$, a Huberized \ac{loocv}
approach is proposed. In standard LOOCV, each observation is removed
from the training set in turn; the model is then refitted using the
remaining samples, and the prediction at the held-out location yields
an out-of-sample error. Specifically, we apply LOOCV to the RBF interpolation
and obtain a sequence of prediction errors $\{e_{k}\}_{k=1}^{K}$,
where each error $e_{k}$ is defined as
\begin{equation}
e_{k}=\Gamma_{ik}-\rho_{i,-k}(r_{k}),\label{eq:-7-1-1}
\end{equation}
where $\rho_{i,-k}(r_{k})$ denotes the interpolant constructed by
excluding the $k$-th measurement from the dataset $\mathcal{S}_{i}=\{(\theta_{i},r_{j},\Gamma_{ij})\}$.
The interpolant is defined as
\[
\rho_{i,-k}(r_{k})=\sum_{\substack{j=1,j\neq k}
}^{K}\lambda_{j}\,\phi\left(\|r_{k}-r_{j}\|\right)+c,
\]
where the coefficients $\lambda_{j}$ and constant $c$ are computed
using the same interpolation procedure described in (\ref{eq:rbf_linear_system})\textendash (\ref{eq:lambda}),
but based only on the reduced measurement set $\Omega_{i}\backslash\{k\}$.

To calibrate the tolerance $\delta$, we estimate a robust typical
error level from the \ac{loocv} residuals using the Huber estimator.
This choice yields a stable summary for $\delta$ while preventing
a few large residuals from dominating.

Let $\{e_{k}\}_{k=1}^{K}$ denote the \ac{loocv} errors. The Huber
estimator $\hat{\mu}$ is defined as the minimizer of the aggregated
Huber loss:
\begin{equation}
\hat{\mu}=\arg\min_{\mu\in\mathbb{R}}\sum_{k=1}^{K}g\bigl(e_{k}-\mu\bigr),\label{eq:huber}
\end{equation}
where $g(\cdot)$ is the Huber loss function given by
\begin{equation}
g(r)=\begin{cases}
\frac{1}{2}r^{2}, & |r|\leq\varsigma_{H},\\[6pt]
\varsigma_{H}\bigl(\lvert r\rvert-\tfrac{1}{2}\varsigma_{H}\bigr), & |r|>\varsigma_{H},
\end{cases}
\end{equation}
and $\varsigma_{H}>0$ is a threshold that controls the transition
from a quadratic penalty to a linear penalty. Equation (\ref{eq:huber})
can be solved via an iterative re-weighted least squares procedure.
In iteration $t$, we update
\begin{equation}
\mu^{(t+1)}=\frac{\sum_{k=1}^{K}w_{k}^{(t)}e_{k}}{\sum_{k=1}^{K}w_{k}^{(t)}},
\end{equation}
where
\begin{equation}
w_{k}^{(t)}=\begin{cases}
1, & \text{if }\bigl|e_{k}-\mu^{(t)}\bigr|\leq\varsigma_{H},\\[6pt]
\dfrac{\varsigma_{H}}{\bigl|e_{k}-\mu^{(t)}\bigr|}, & \text{otherwise}.
\end{cases}
\end{equation}

Upon convergence, we set the tolerance based on this robust center,
$\delta=\hat{\mu}$. This data-driven choice ties $\delta$ to the
observed out-of-sample interpolation error of the RBF prior, avoiding
both overfitting and underfitting.

\subsection{Complexity Analysis\label{sec:Complexity-Analysis}}

The proposed RBF-driven matrix completion method consists of two main
components: (\emph{i}) a regularized RBF interpolation stage and (\emph{ii})
an NNM-based matrix completion stage. Each component has distinct
computational characteristics.

Let $I$ and $J$ denote the angular and radial resolutions of the
reconstructed radio map, respectively, i.e., the target RSS matrix
is $\bm{Z}\in\mathbb{R}^{I\times J}$. Interpolation is performed
independently along each angular slice $\theta_{i}$, for $i=1,2,\cdots,I$
with $\Omega_{i}$ represents the set of available measurement indices
at angle $\theta_{i}$, and assume $|\Omega_{i}|=K\ll J$. For each
$\theta_{i}$, constructing the RBF kernel matrix $\bm{\Phi}\in\mathbb{R}^{K\times K}$
requires $O(K^{2})$ operations. Solving the linear system in (\ref{eq:rbf_linear_system})
to obtain $(\bm{\lambda},c)$ requires $O(K^{3})$ operations. Thus,
across all $I$ angular directions, the total interpolation cost is
$O(IK^{3})$.

To refine the interpolated map, we solve the NNM-based matrix completion
problem (\ref{eq:MC}) , where the interpolated values serve as soft
constraints. For an $I\times J$ matrix, the completion cost is: $O((I+J)\log^{3}(I+J))$
\cite{JaiNet:B15}. The tolerance parameter $\delta$ is selected
via the Huberized LOOCV method. This requires re-solving the interpolation
$K$ times per angular slice, resulting in an overall LOOCV complexity
of $O(IK^{4}).$

Hence, the total computational complexity of the proposed method is
$O(IK^{4}+(I+J)\log^{3}(I+J)).$

The overall complexity is essentially the additive combination of
the baseline RBF interpolation and matrix completion modules, each
offering complementary benefits in terms of spatial modeling capabilities.

\section{Simulation Results\label{sec:Simulation-Result}}

We evaluate the performance of the proposed RBF-assisted matrix completion
method under a near-field millimeter-wave \ac{xl-mimo} scenario.
Unless stated otherwise, all simulations are performed under the following
settings.

\subsection{Simulation Setup}

The transmitter is equipped with a \ac{ula} of $N=256$ antennas,
with an inter-element spacing of $\lambda/2=0.0015\,\text{m}$. This
results in a total aperture of $D=N\lambda/2=0.384\,\text{m}$. The
carrier frequency is $f=100\,\text{GHz}$ (wavelength $\lambda=0.003\,\text{m}$),
yielding a Rayleigh distance of $Z=2D^{2}/\lambda\approx98.3\,\text{m}$.
The \ac{ula} is aligned with the y-axis and centered at the origin.
The transmitter transmits a unit-power symbol $s=1$, and the total
transmit power is normalized to $P=1$. We consider an omnidirectional
beamformer $\bm{v}=\frac{1}{\sqrt{N}}\bm{1}_{N}$ to focus on the
spatial distribution of RSS.

The area of interest spans angles $\theta\in[-80^{\circ},80^{\circ}]$
and radial distances $r\in(0,10]\,\mathrm{m}$. This domain is uniformly
discretized into an $I\times J$ grids with $I=J=100$. The ground-truth
RSS matrix, $\bm{Z}$, is generated according to \eqref{eq:gamma_ij},
incorporating a shadowing component with standard deviation $\sigma\in[1,4]$
dB. The path gain $\beta_{m}$ and element-wise distances $\{d_{m}^{(n)}\}$
are computed using the spherical wavefront model as defined in \eqref{eq:beta m}
and \eqref{eq:r m n}, respectively.

We quantify the performance of the near-field radio map reconstruction
using the \ac{nmse}, defined as:
\begin{equation}
\text{NMSE}=\frac{\|10^{\bm{Z}/10}-10^{\hat{\bm{Z}}/10}\|_{F}^{2}}{\|10^{\bm{Z}/10}\|_{F}^{2}},\label{eq:nmse}
\end{equation}
where $\bm{Z}$ and $\hat{\bm{Z}}$ are the ground-truth and estimated
RSS matrices in decibels (dB), respectively, and $\|\cdot\|_{F}$
denotes the Frobenius norm.

In the proposed method, the tolerance parameter $\delta$ in \eqref{eq:MC}
is automatically determined via the robust Huberized LOOCV scheme,
with the Huber threshold $\varsigma_{H}$ set to the median absolute
deviation of the LOOCV residuals $\varsigma_{H}=\text{median}\Bigl(\lvert e_{i}-\text{median}(e_{j})\rvert\Bigr).$

We compare the proposed approach against four baselines:
\begin{itemize}
\item RBF Interpolation: This method applies multiquadric RBF interpolation
to each angular slice independently to estimate the missing RSS values,
without leveraging global matrix structure.
\item Matrix Completion via \ac{nnm} (MC-NNM)\cite{CanPla:J10}: A standard
matrix completion approach that solves \eqref{eq:MC} using \ac{nnm}
without RBF-based interpolation prior.
\item Local Polynomial Regression (LPR): A first-order local polynomial
model \cite{Fan:b96, VerFunRaj:J16} with a Gaussian weighting kernel.
\item LPR-assisted Matrix Completion (LPR-MC): A two-stage method where
missing RSS values are first estimated using \ac{lpr}, followed by
a refinement step via an \ac{nnm}-based matrix completion algorithm
\cite{CheWanZha:J23, SunChe:J22}.
\end{itemize}

\subsection{Analysis of RBF Kernel Selection}

We first validate the design choices for our RBF interpolation module.
\begin{figure}
\includegraphics{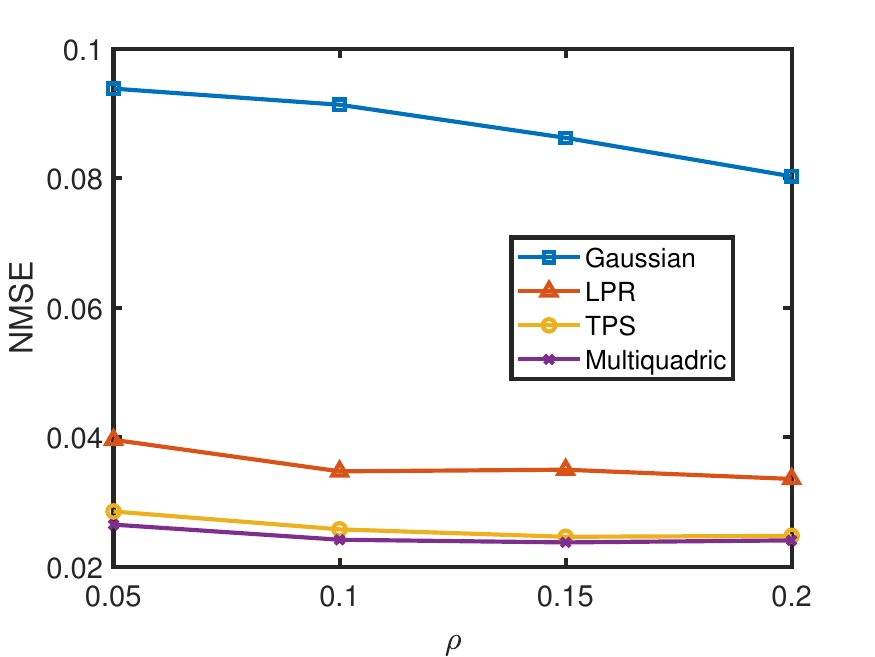}\caption{\label{fig:Reconstruction-NMSE-versus sr under interpolation}Reconstruction
NMSE versus sampling ratio for different interpolation methods. The
Quadratic RBF outperforms TPS, Gaussian RBF and LPR, particularly
at low sampling ratios.}
\end{figure}
\begin{figure}
\includegraphics{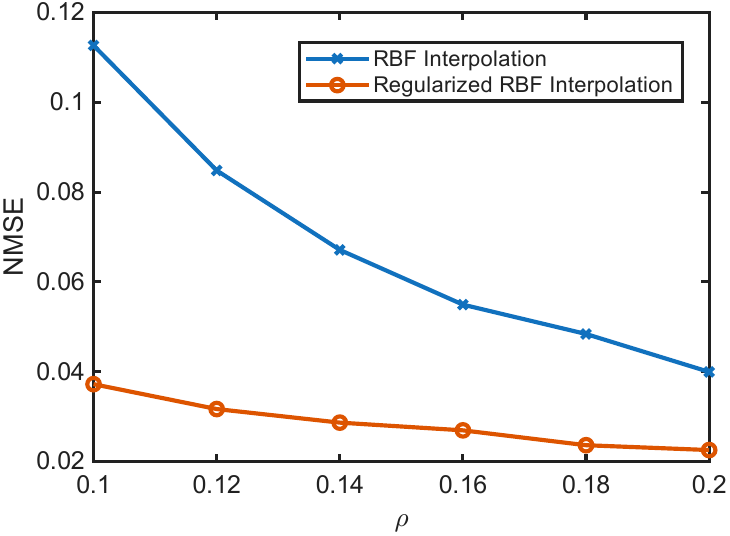}

\caption{\label{fig:regularized RBF}Comparison of reconstruction NMSE versus
sampling ratio $\rho$ for regularized and non-regularized RBF interpolation.
The proposed regularization achieves substantial gains at low sampling
densities by improving stability and extrapolation robustness.}
\end{figure}

We compare four interpolation schemes: multiquadric RBF ($\phi(d)=\sqrt{1+\epsilon d^{2}}$
with $\epsilon=1$), Gaussian RBF ($\phi(d)=\exp(-\epsilon d^{2})$),
\ac{tps} ($\phi(d)=d^{2}\log d$), and \ac{lpr}. Interpolation is
performed on the radial dimension given each angle, using a uniformly
sampled subset of measurements. To ensure a fair comparison, hyperparameters
for all four interpolation schemes are carefully tuned. In particular,
the LPR adopts a first-order local polynomial model with a Gaussian
kernel for weighting.

Fig. \ref{fig:Reconstruction-NMSE-versus sr under interpolation}
plots the reconstruction \ac{nmse} versus the sampling ratio $\rho=0.05-0.2$.
The multiquadric RBF consistently outperforms the other kernels, particularly
at low sampling ratios. Unlike the overly smooth Gaussian and \ac{tps}
kernels, it effectively balances smoothness with the ability to capture
sharp local variations, justifying its use in our framework.

\subsection{Effectiveness of the Regularized RBF Interpolation}

We now examine the performance gain brought by the proposed regularized
RBF interpolation introduced in (\ref{eq:rho_i}). We randomly sample
a subset of entries in each row of the ground-truth RSS matrix $\bm{Z}$,
with the sampling ratio $\rho$ ranging from 0.1 to 0.2.

Fig.~\ref{fig:regularized RBF} compares the reconstruction NMSE
of the standard RBF interpolation and the proposed regularized RBF
interpolation across different sampling ratios. The results clearly
show that the regularized RBF consistently achieves lower NMSE over
all sampling regimes, yielding more than $40$\% NMSE reduction compared
to the unregularized baseline.

This performance improvement is primarily attributed to the inclusion
of a constant term in the RBF formulation. Without this term, the
RBF interpolant must implicitly capture both local variations and
the global offset of the signal. This can lead to instability and
overfitting, especially near the boundaries or in regions with sparse
observations. By decoupling the global offset (handled by the constant
term) from local fluctuations (modeled by the RBFs), the regularized
interpolation becomes more robust and accurate, especially in extrapolation-prone
areas.

\begin{figure}
\includegraphics{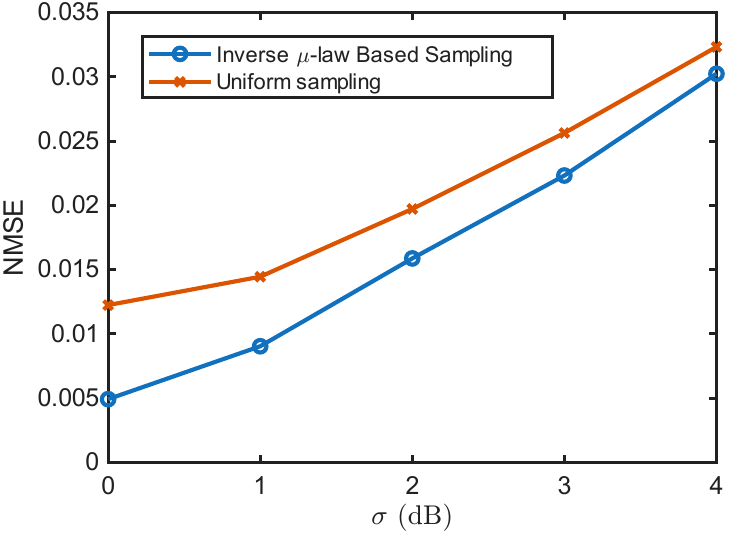}

\caption{\label{fig:mu law}Comparison of reconstruction NMSE versus RSS shadowing
condition $\sigma$ under uniform sampling and inverse $\mu$-law-inspired
non-uniform sampling. The proposed non-uniform sampling strategy consistently
achieves lower NMSE, particularly under low to moderate shadowing.}
\end{figure}

\begin{figure}
\includegraphics{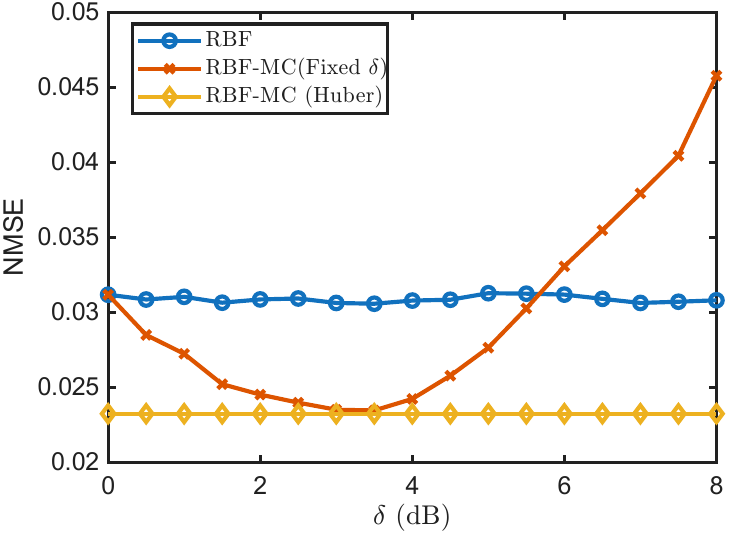}

\caption{\label{fig:huber}Comparison of reconstruction NMSE versus tolerance
parameter $\delta$ for three methods: RBF interpolation (blue), RBF-driven
matrix completion with fixed $\delta$ (red), and the proposed RBF-driven
matrix completion using Huberized LOOCV for automatic $\delta$ estimation
(yellow).}
\end{figure}

\subsection{Performance of the Proposed Inverse $\mu$-Law-Inspired Non-Uniform
Sampling}

In this subsection, we evaluate the effectiveness of the proposed
inverse $\mu$-law-inspired non-uniform sampling strategy for near-field
RSS interpolation. The sampling ratio $\rho=M/(I\times J)$ is fixed
at $0.1$, indicating that only $10\%$ of the RSS measurements over
the angular-distance domain are available for interpolation. To model
the effect of shadowing, we inject $\varepsilon\sim\mathcal{N}(0,\sigma^{2})$
into the RSS values as defined in (\ref{eq:gamma_ij}), and vary the
standard deviation $\sigma$ from $1$ dB to $4$ dB. We adopt a companding
parameter $\mu=15$ for the inverse $\mu$-law transformation to achieve
a dense sampling distribution near the transmitter.

Fig.~\ref{fig:mu law} presents the reconstruction NMSE under both
uniform and inverse $\mu$-law-inspired non-uniform sampling strategies.
Across all shadowing conditions, the proposed method significantly
outperforms the uniform baseline, achieving more than $10\%$ reduction
in NMSE. This improvement is attributed to the fact that the inverse
$\mu$-law transformation allocates more sampling points in regions
of rapid signal variation, such as, near the transmitter, where the
RSS changes more abruptly due to spherical wavefront propagation.
In contrast, uniform sampling may waste measurements in smoother regions,
leading to higher interpolation error in critical areas.

As $\sigma$ increases, the NMSE under both sampling schemes rises
due to the growing impact of the shadowing effect on the measurements.
Nevertheless, the proposed non-uniform strategy maintains its advantage,
demonstrating better robustness to the shadowing effect.

These results confirm that the proposed inverse $\mu$-law-inspired
sampling approach can efficiently allocate limited measurement resources
and substantially enhance interpolation accuracy.

\subsection{Performance of the Huberized LOOCV $\delta$ Estimate}

We examine the impact of the tolerance parameter $\delta$ in the
proposed RBF-driven matrix completion framework and demonstrate the
effectiveness of using the Huberized LOOCV scheme for automatically
$\delta$ estimation. The sampling ratio is fixed at $\rho=0.2$,
and the shadowing standard deviation is set to $\sigma=4$.

To assess sensitivity, we vary the tolerance parameter $\delta$ in
the matrix completion problem (\ref{eq:MC}) from 0 to 8. The resulting
reconstruction NMSE for fixed-$\delta$ configurations is shown as
the red curve in Fig.~\ref{fig:huber}. The performance clearly exhibits
a strong dependence on the choice of $\delta$: excessively small
values result in overfitting to noisy RBF interpolants, while excessively
large values fail to enforce meaningful agreement between the matrix
completion output and the RBF estimates. An optimal range for $\delta$
exists, but identifying it manually is challenging and data-dependent.

To mitigate this issue, we employ the Huberized LOOCV method proposed
in Section~\ref{subsec:Fusing-the-RBF} to automatically determine
a robust $\delta$. The result of this method is shown as the yellow
curve in Fig.~\ref{fig:huber}, which appears flat across all $\delta$
values because $\delta$ is internally estimated rather than externally
set. The performance is consistently superior to both the RBF interpolation
(blue curve) and most fixed-$\delta$ settings, validating the effectiveness
and robustness of the Huberized estimator.

These results confirm that (i) the performance of RBF-driven matrix
completion is highly sensitive to the $\delta$ parameter, and (ii)
the proposed Huberized LOOCV estimator offers a principled and robust
way to automatically select $\delta$, avoiding manual tuning and
enhancing reconstruction accuracy.

\subsection{Performance of the Proposed RBF-Driven Matrix Completion Under Different
Sampling Ratios and Shadowing Conditions}

In this subsection, we evaluate the performance of the proposed RBF-driven
matrix completion method under varying sampling ratios and shadowing
conditions, comparing it with four baseline approaches.
\begin{figure}
\includegraphics{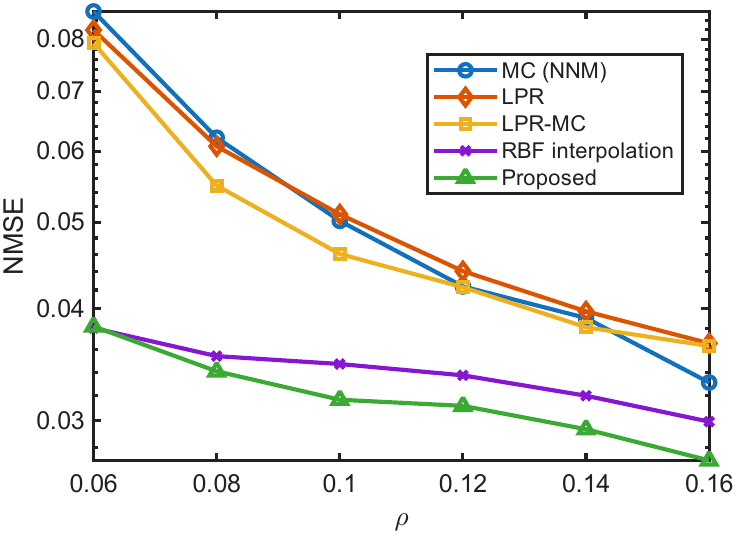}\caption{\label{fig:Reconstruction-NMSE-versus sr}Reconstruction NMSE versus
sampling ratio $\rho$. The proposed RBF-driven matrix completion
method consistently achieves the lowest NMSE across all sampling ratios.
Notably, it offers more than $10$\% improvement compared to baseline
methods when the sampling ratio is moderate.}
\end{figure}
 
\begin{figure}
\includegraphics{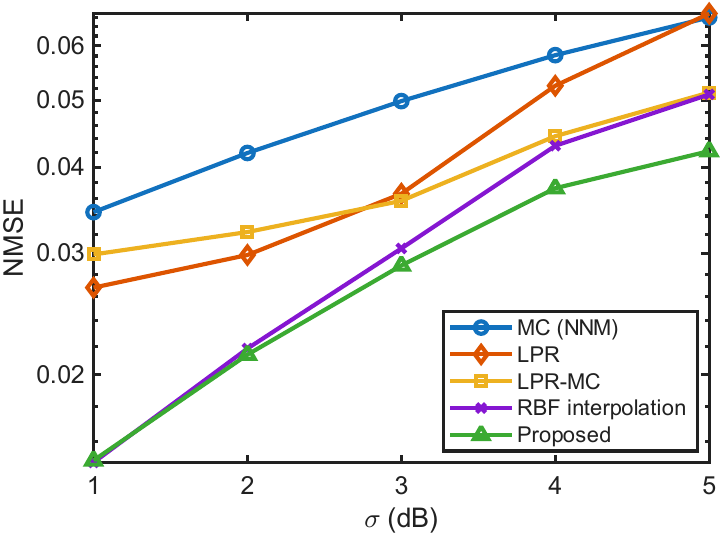}\caption{\label{fig:Reconstruction-NMSE-versus noiselevel}Reconstruction NMSE
versus shadowing standard deviation $\sigma$. The proposed method
consistently achieves the lowest NMSE across varying shadowing levels,
demonstrating strong robustness to the shadowing effect.}
\end{figure}

Fig.~\ref{fig:Reconstruction-NMSE-versus sr} illustrates the reconstruction
NMSE as a function of the sampling ratio $\rho$, ranging from $0.06$
to $0.16$, with a fixed shadowing standard deviation $\sigma=3$.
The proposed method consistently outperforms all baselines, achieving
more than a $10\%$ improvement in NMSE at moderate to high sampling
ratios. However, at very low sampling ratios, the scarcity of measurements
adversely affects the accuracy of the Huberized LOOCV-based parameter
estimation, leading to a degradation in reconstruction performance

Next, we evaluate the robustness of the proposed method against the
shadowing effect. Fig.~\ref{fig:Reconstruction-NMSE-versus sr} illustrates
the reconstruction NMSE as a function of the shadowing standard deviation
$\sigma$, which varies from $1$ to $5$, with the sampling ratio
fixed at $\rho=0.1$. The proposed method consistently outperforms
all baseline approaches across the entire range of shadowing conditions,
achieving more than a $10$\% improvement in NMSE even under severe
shadowing. This robustness is attributed to the Huberized LOOCV-based
selection of the tolerance parameter, which effectively integrates
RBF interpolation with matrix completion. By leveraging both local
smoothness and the global low-rank structure, the proposed method
significantly mitigates the impact of outliers in noisy measurements.

These results confirm the effectiveness of the proposed framework
in leveraging both spatial smoothness and low-rank structure, and
its robustness to practical challenges such as sparse sampling and
the shadowing effect.

To qualitatively assess reconstruction accuracy, we present a visual
comparison of the recovered near-field radio map under a sampling
ratio $\rho=0.1$ and shadowing $\sigma=4$ dB. As shown in Fig.~\ref{fig:Visual-plot-of},
the proposed method exhibits the closest resemblance to the ground
truth, with sharper spatial features and reduced artifacts compared
to the baseline approaches.

\begin{figure}
\includegraphics{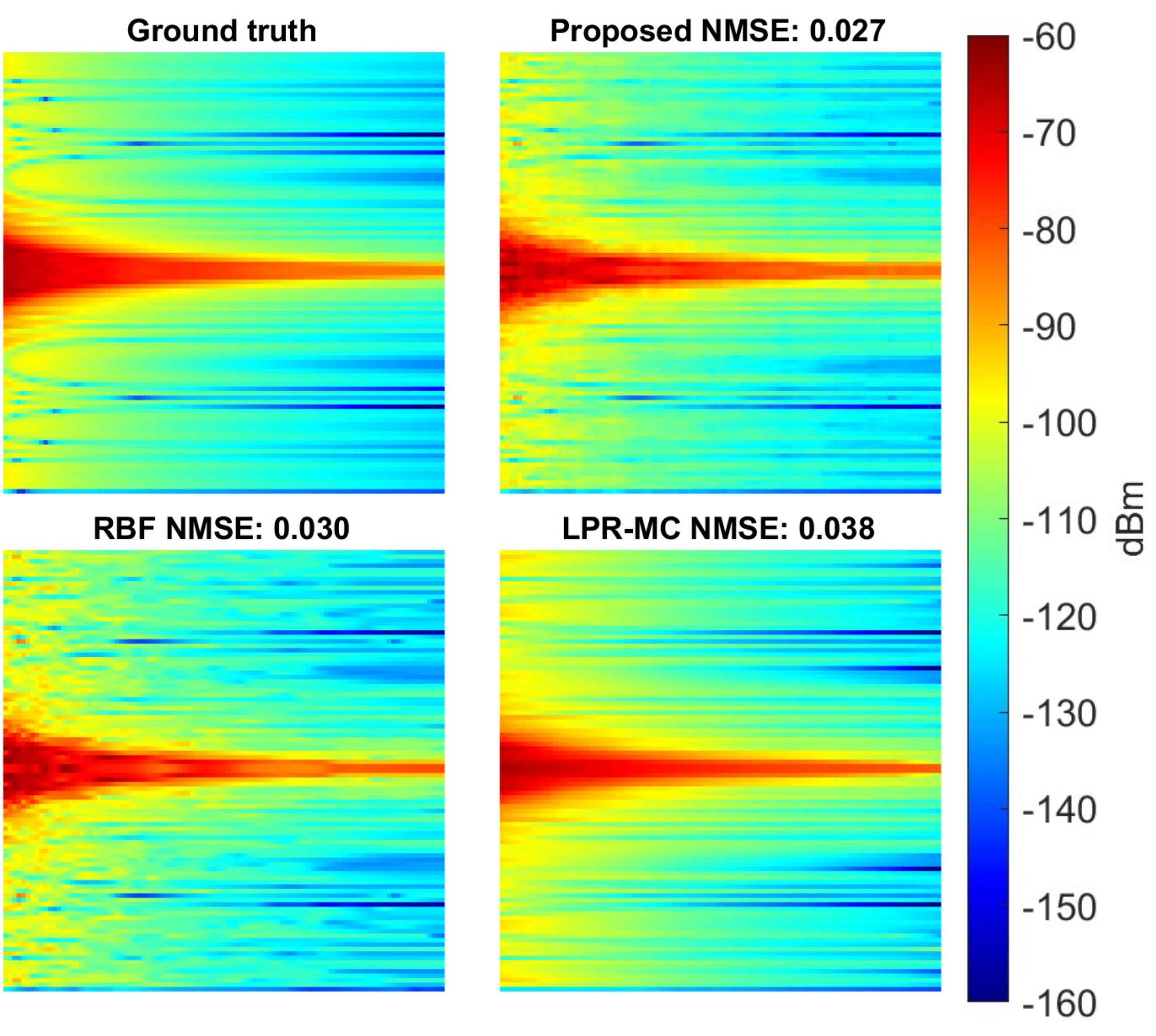}\caption{\label{fig:Visual-plot-of}Visual comparison of reconstruction performance
at sampling ratio $\rho=0.1$ and shadowing $\sigma=4$ dB. The horizontal
axis represents the distance $d$, and the vertical axis represents
the angular direction $\theta$.}
\end{figure}

\section{Conclusion\label{sec:Conclusion}}

In this paper, we proposed a novel structure -aware RBF-assisted matrix
completion framework tailored explicitly for near-field radio map
reconstruction in millimeter-wave XL-MIMO systems. Our approach effectively
integrates local spatial variations through a regularized RBF interpolation
with a global low-rank structure via \ac{nnm}, significantly outperforming
conventional techniques. The introduction of a constant term in the
regularized RBF interpolation notably enhanced accuracy near the boundaries,
overcoming the limitations prevalent in standard interpolation methods.
Leveraging theoretical interpolation error analysis, an adaptive inverse
$\mu$-law-inspired non-uniform sampling strategy is designed to optimally
allocate measurement points, focusing sampling density in areas with
rapid RSS variations near the transmitter. Moreover, a robust Huberized
LOOCV scheme is introduced to adaptively select the tolerance parameter,
effectively bridging local spatial correlation from RBF interpolation
and global low-rank characteristics inherent in matrix completion.
Extensive simulation results validated the superior performance and
robustness of the proposed method, achieving more than $10$\% improvement
in NMSE compared to existing interpolation and matrix completion techniques
under various sampling densities and shadowing conditions.


\appendices{}


\section{Proof of Theorem \ref{thm:The-partial-derivatives}\label{sec:Proof-of-Lemma non smooth}}

Given the near-field array factor
\[
S(d,\vartheta)=\frac{\lambda}{4\pi d}\sum_{n=0}^{N-1}v_{n}e^{-\jmath\frac{2\pi}{\lambda}\,d^{(n)}(d,\vartheta)},
\]
where
\[
d^{(n)}(d,\vartheta)=\sqrt{d^{2}+\delta_{n}^{2}\lambda^{2}/4-d\delta_{n}\lambda\cos\vartheta},
\]
we analyze the sensitivity of $|S(d,\vartheta)|$ \ac{wrt} both the
radial distance $d$ and angle $\vartheta$. At certain critical points
where the phase contributions from all array elements precisely cancel
each other (i.e., $S(d,\vartheta)=0$), the magnitude $|S(d,\vartheta)|$
becomes non-differentiable, despite $S(d,\vartheta)$ itself remaining
smooth. In the following derivation, unless otherwise noted, we consider
the differentiable region only.

We first show that $S(d,\vartheta)$ is smooth in both $d$ and $\vartheta$.
The expression $d^{(n)}$ involves a square root of a quadratic polynomial,
which is smooth because $d^{(n)}(d,\vartheta)$ represents the distance
from $n$th antenna to the receiver. The function
\[
g_{n}(d,\vartheta)=-\jmath\tfrac{2\pi}{\lambda}d^{(n)}(d,\vartheta)
\]
is a smooth composition of the linear factor $-\jmath\tfrac{2\pi}{\lambda}$
and the smooth square root, and hence is smooth itself. Consequently,
the complex exponential $f_{n}(d,\vartheta)=e^{g_{n}(d,\vartheta)}$
is also smooth by the chain rule. Since $\ensuremath{S(d,\vartheta)}$
is a finite sum of smooth functions, $S(d,\vartheta)=\sum_{n=0}^{N-1}f_{n}(d,\vartheta),$
it follows that $S(d,\vartheta)\in C^{\infty}$, i.e., it is infinitely
differentiable in both $d$ and $\vartheta$.

Then, we perform the derivative of $S(d,\vartheta)$ \ac{wrt} both
$d$ and $\vartheta$. The angular derivative of $S(d,\vartheta)$
\ac{wrt} $\vartheta$ is computed as
\begin{align*}
\frac{\partial S}{\partial\vartheta} & =-\frac{\jmath}{2d}\sum_{n=0}^{N-1}v_{n}\frac{\partial d^{(n)}}{\partial\vartheta}e^{-\jmath\frac{2\pi}{\lambda}d^{(n)}}.
\end{align*}
Taking the magnitude, we have
\begin{align*}
\left|\frac{\partial S}{\partial\vartheta}\right| & \leq\left|-\frac{\jmath}{2d}\right|\sum_{n=0}^{N-1}|v_{n}|\left|\frac{\partial d^{(n)}}{\partial\vartheta}\right||e^{-\jmath\frac{2\pi}{\lambda}d^{(n)}}|\\
 & \leq\frac{\text{max}_{n}|v_{n}|}{2d}\sum_{n=0}^{N-1}\left|\frac{\partial d^{(n)}}{\partial\vartheta}\right|\\
 & \leq\frac{1}{2d}\sum_{n=0}^{N-1}\left|\frac{\partial d^{(n)}}{\partial\vartheta}\right|.
\end{align*}
Since $\partial d^{(n)}/\partial\vartheta=d\lambda\delta_{n}/2/d^{(n)}\sin\vartheta$
and $\delta_{n}$ typically grows linearly with $n$, we have
\[
\left|\frac{\partial S}{\partial\vartheta}\right|\leq\frac{\lambda|\sin\vartheta|}{4}\sum_{n=0}^{N-1}|\delta_{n}|/d^{(n)}.
\]
Define $d_{\text{min}}=\text{min}_{n}d^{(n)}$. Since $\sum_{n=0}^{N-1}|\delta_{n}|\leq N^{2}/4+1$,
then
\[
\left|\frac{\partial S}{\partial\vartheta}\right|\leq\frac{\lambda|\sin\vartheta|}{16d_{\text{min}}}(N^{2}+4)\leq\frac{5\lambda|\sin\vartheta|}{16d_{\text{min}}}N^{2},
\]
since $N>1$.

The derivative over distance $d$ is given by
\begin{align}
\frac{\partial S}{\partial d} & =-\frac{\lambda}{4\pi d^{2}}\sum_{n=0}^{N-1}v_{n}e^{-\jmath\frac{2\pi}{\lambda}d^{(n)}}\label{eq:s/d}\\
 & \qquad-\frac{1}{2d}\jmath\!\sum_{n=0}^{N-1}v_{n}\frac{\partial d^{(n)}}{\partial d}\,e^{-\jmath\frac{2\pi}{\lambda}d^{(n)}},\nonumber 
\end{align}
with $\partial d^{(n)}/\partial d=(2d-\delta_{n}\lambda\cos\vartheta)/(2d^{(n)}).$

Taking the magnitude, we have
\begin{align}
\left|\frac{\partial S}{\partial d}\right| & \leq\left|\frac{\lambda}{4\pi d^{2}}\sum_{n=0}^{N-1}v_{n}e^{-\jmath\frac{2\pi}{\lambda}d^{(n)}}\right|\label{eq:s/d-1}\\
 & \qquad+\left|\frac{1}{2d}\jmath\!\sum_{n=0}^{N-1}v_{n}\frac{\partial d^{(n)}}{\partial d}\,e^{-\jmath\frac{2\pi}{\lambda}d^{(n)}}\right|,\nonumber \\
 & <\frac{\lambda N}{4\pi d^{2}}+\left|\frac{1}{2d}\sum_{n=0}^{N-1}v_{n}\frac{2d-\delta_{n}\lambda\cos\vartheta}{2d^{(n)}}\,e^{-\jmath\frac{2\pi}{\lambda}d^{(n)}}\right|
\end{align}
Nest, we focus on bounding the second term $\left|\frac{1}{2d}\sum_{n=0}^{N-1}v_{n}\frac{2d-\delta_{n}\lambda\cos\vartheta}{2d^{(n)}}\,e^{-\jmath\frac{2\pi}{\lambda}d^{(n)}}\right|$.

Applying the triangle inequality and using the bound $|v_{n}|\leq$1,
we have:
\begin{align}
 & \left|\frac{1}{2d}\sum_{n=0}^{N-1}v_{n}\frac{2d-\delta_{n}\lambda\cos\vartheta}{2d^{(n)}}\,e^{-\jmath\frac{2\pi}{\lambda}d^{(n)}}\right|\nonumber \\
\leq & \frac{1}{2d}\sum_{n=0}^{N-1}\left|v_{n}\right|\left|\frac{2d-\delta_{n}\lambda\cos\vartheta}{2d^{(n)}}\right|\nonumber \\
\leq & \frac{1}{2d}\sum_{n=0}^{N-1}\left|\frac{2d-\delta_{n}\lambda\cos\vartheta}{2d^{(n)}}\right|.\label{eq:bound}
\end{align}
To further simplify \eqref{eq:bound}, we first analyze the squared
distance $(d^{(n)})^{2}$ by completing the square.
\begin{align}
 & (d^{(n)})^{2}\nonumber \\
 & =d^{2}+\frac{\delta_{n}^{2}\lambda^{2}}{4}-d\delta_{n}\lambda\cos\vartheta\nonumber \\
 & =\left(d^{2}-d\delta_{n}\lambda\cos\vartheta+\frac{\delta_{n}^{2}\lambda^{2}}{4}\cos^{2}\vartheta\right)+\frac{\delta_{n}^{2}\lambda^{2}}{4}-\frac{\delta_{n}^{2}\lambda^{2}}{4}\cos^{2}\vartheta\nonumber \\
 & =\left(d-\frac{\delta_{n}\lambda}{2}\cos\vartheta\right)^{2}+\frac{\delta_{n}^{2}\lambda^{2}}{4}(1-\cos^{2}\vartheta)\nonumber \\
 & =\left(d-\frac{\delta_{n}\lambda}{2}\cos\vartheta\right)^{2}+\left(\frac{\delta_{n}\lambda}{2}\sin\vartheta\right)^{2}.\label{eq:dn}
\end{align}
Now, we substitute \eqref{eq:dn} into \eqref{eq:bound}:
\begin{align*}
\left|\frac{2d-\delta_{n}\lambda\cos\vartheta}{2d^{(n)}}\right| & =\left|\frac{2\left(d-\frac{\delta_{n}\lambda}{2}\cos\vartheta\right)}{2\sqrt{\left(d-\frac{\delta_{n}\lambda}{2}\cos\vartheta\right)^{2}+\left(\frac{\delta_{n}\lambda}{2}\sin\vartheta\right)^{2}}}\right|\\
 & =\frac{\left|d-\frac{\delta_{n}\lambda}{2}\cos\vartheta\right|}{\sqrt{\left(d-\frac{\delta_{n}\lambda}{2}\cos\vartheta\right)^{2}+\left(\frac{\delta_{n}\lambda}{2}\sin\vartheta\right)^{2}}}.
\end{align*}

To simplify, let $X=d-\frac{\delta_{n}\lambda}{2}\cos\vartheta$ and
$Y=\frac{\delta_{n}\lambda}{2}\sin\vartheta$. The expression becomes
$\frac{|X|}{\sqrt{X^{2}+Y^{2}}}$.

Since $Y^{2}\geq0$, the denominator $\sqrt{X^{2}+Y^{2}}$ is always
greater than or equal to $\sqrt{X^{2}}=|X|$. This leads to the following
strict inequality:
\begin{equation}
\frac{|X|}{\sqrt{X^{2}+Y^{2}}}\leq1.
\end{equation}

This demonstrates that $\left|\frac{2d-\delta_{n}\lambda\cos\vartheta}{2d^{(n)}}\right|\leq1$
for all $n$, $d$, and $\vartheta$.

Finally, we have
\begin{equation}
\frac{1}{2d}\sum_{n=0}^{N-1}\left|\frac{2d-\delta_{n}\lambda\cos\vartheta}{2d^{(n)}}\right|\leq\frac{1}{2d}\sum_{n=0}^{N-1}(1)=\frac{N}{2d}.
\end{equation}

Consequently, we conclude that the range derivative satisfies: 
\begin{equation}
\left|\frac{\partial S}{\partial d}\right|\leq\left(\frac{\lambda}{4\pi d^{2}}+\frac{1}{2d}\right)N\label{eq:d/dn}
\end{equation}

The derivative of the magnitude satisfies
\begin{equation}
\frac{\partial|S|}{\partial x}=\Bigl|\frac{\partial S}{\partial x}\Bigr|\,\cos\phi_{x},\qquad x\in\{d,\vartheta\},
\end{equation}
where $\cos\phi_{x}$ is a phase-alignment factor with $|\cos\phi_{x}|\le1$
and is independent of the array size $N$. Thus, the scalings in $N$
of $\bigl|\partial|S|/\partial x\bigr|$ and $\bigl|\partial S/\partial x\bigr|$
are identical. Thus, we have 
\begin{align}
\frac{\partial|S|}{\partial\vartheta} & \leq\text{cos}(\phi_{\vartheta})\frac{5\lambda|\sin\vartheta|}{16d_{\text{min}}}N^{2}\label{eq: sensitivity angle-1}\\
\frac{\partial|S|}{\partial d} & \leq\text{cos}(\phi_{d})\left(\frac{\lambda}{4\pi d^{2}}+\frac{1}{2d}\right)N.\label{eq:sensitivity distance-1}
\end{align}

\section{\label{sec:Proof-of-Proposition}Proof of Theorem \ref{thm:Suppose-that-}}

Consider the function $f(d)=(1+d)^{\beta}$. Its $\ell$th derivative
is given by
\[
f^{(\ell)}(d)=\beta(\beta-1)\cdots(\beta-\ell+1)(1+d)^{\beta-\ell},
\]
which can be rewritten as
\begin{equation}
f^{(\ell)}(d)=\ell!\prod_{j=0}^{\ell-1}\frac{\beta-j}{j+1}\cdot(1+d)^{\beta-\ell}.\label{eq:flr}
\end{equation}
We now bound each term in the product in (\ref{eq:flr}). For each
index $0\le j\le\ell-1$, we have
\[
\left|\frac{\beta-j}{j+1}\right|=\left|1-\frac{\beta+1}{j+1}\right|\le1+\frac{|\beta+1|}{j+1}\le1+|\beta+1|.
\]
Denote $M_{0}=1+|\beta+1|$. Since there are $\ell$ factors in the
product, the over all bound becomes
\[
\left|\ell!\prod_{j=0}^{\ell-1}\frac{\beta-j}{j+1}\right|\leq\ell!M_{0}^{\ell}.
\]
From (\ref{eq:flr}), we thus have
\begin{equation}
\left|f^{(\ell)}(d)\right|\le\ell!\,M_{0}^{\ell}|(1+d)^{\beta-\ell}|.\label{eq:bound1}
\end{equation}
Since $1+d$ is bounded, as $d\in\mathcal{D}$, it is clearly that
$|(1+d)^{\beta-\ell}|\leq M_{1}$, thus, combining this with (\ref{eq:bound1})
gives
\[
\left|f^{(\ell)}(d)\right|\le\ell!M_{1}M_{0}^{\ell}\leq\ell!C_{1}^{\ell},
\]
where $C_{1}=M_{1}^{1/\ell}M_{0}$.

This derivative growth condition satisfies the hypothesis of Theorem
11.22 in \cite{Wen:B04}, which guarantees that, for sufficiently
small fill distance $h$, the interpolation error decays exponentially
\[
\left\Vert f-\rho\right\Vert _{L_{\infty}(\mathcal{D})}\leq e^{-c/h}|f|_{\mathcal{N}_{\phi}(\mathcal{D})}.
\]
This completes the proof of Theorem \ref{thm:Suppose-that-}.

\bibliographystyle{IEEEtran}
\bibliography{IEEEabrv,StringDefinitions,JCgroup,ChenBibCV}

\end{document}